\def\colorful{1}
\newcommand{\anote}[1]{\footnote{{\bf [Ankit: {#1} ]}}}
\newcommand{\hnote}[1]{\footnote{{\bf [Hadi: {#1} ]}}}
\newcommand{\vnote}[1]{\footnote{{\bf [Varun: {#1} ]}}}
\newcommand{\anote}[1]{}
\newcommand{\hnote}[1]{}
\newcommand{\vnote}[1]{}
\def\E{\mathbb E}
\newcommand{\poly}{\mathrm{poly}}
\newcommand{\cT}{\mathcal{T}}
\newcommand{\cY}{\mathcal{Y}}
\newcommand{\cH}{\mathcal{H}}
\newcommand{\cX}{\mathcal{X}}
\newcommand{\cB}{\mathcal{B}}
\newcommand{\nstar}{n^{\!\ast}}
\newcommand{\alb}{\Bar{\pi}}
\newcommand{\lamb}{\Bar{\lambda}}
\newcommand{\pb}{\Bar{p}}
\newcommand{\qb}{\Bar{q}}
\newcommand{\betl}{\beta_{\lambda}}
\newcommand{\betls}{\beta_{\lambda_*}}
\newcommand{\ls}{\ensuremath{\lambda_*}}
\newcommand{\lsb}{\ensuremath{\Bar{\lambda_*}}}
\newcommand{\Ber}{\mathrm{Ber}}
\newcommand{\nsid}{n^*_{\mathrm{id}}}
\newcommand{\nsseq}{n^*_{\mathrm{seq}}}
\newcommand{\nsnid}{n^*_{\mathrm{non\text{-}id}}}
\newcommand{\prob}{\ensuremath{\mathbb{P}}}
\newcommand{\real}{\ensuremath{\mathbb{R}}}
\crefname{equation}{Equation}{Equations}
\crefname{lem}{Lemma}{Lemmata}
\crefname{claim}{Claim}{Claims}
\crefname{fact}{Fact}{Facts}
\crefname{thr}{Theorem}{Theorems}
\crefname{prop}{Proposition}{Propositions}
\crefname{cor}{Corollary}{Corollaries}
\crefname{rem}{Remark}{Remarks}
\crefname{define}{Definition}{Definitions}
\crefname{question}{Question}{Questions}
\crefname{condition}{Condition}{Conditions}
\crefname{figure}{Figure}{Figures}
\newtheorem{thr}{Theorem}
\newtheorem{lem}{Lemma}
\newtheorem{fact}{Fact}
\newtheorem{question}{Question}
\newtheorem{define}{Definition}
\definecolor{Red}{rgb}{1,0,0}
\definecolor{Blue}{rgb}{0,0,1}
\definecolor{DGreen}{rgb}{0,0.55,0}
\definecolor{Purple}{rgb}{.75,0,.25}
\renewcommand{\left}{\mleft}
\renewcommand{\right}{\mright}
\title{The Sample Complexity of Distributed Simple Binary Hypothesis Testing under Information Constraints}
\author{
Hadi Kazemi
\\
University of Cambridge\\
{\tt hk569@cam.ac.uk}\\
\and
Ankit Pensia\thanks{Supported by Research Pod on Resilience in Brain, Natural, and Algorithmic Systems at the Simons Institute.}\\
Simons Institute, UC Berkeley\\
{\tt ankitp@berkeley.edu}
\and
Varun Jog\\
University of Cambridge\\
{\tt vj270@cam.ac.uk}
}
\begin{document}
\date{}
\maketitle

\begin{abstract}%
This paper resolves two open problems from a recent paper~\cite{PenEtal24b} concerning the sample complexity of distributed simple binary hypothesis testing under information constraints. The first open problem asks whether interaction reduces the sample complexity of distributed simple binary hypothesis testing. In this paper, we show that sequential interaction does not help. The second problem suggests tightening existing sample complexity bounds for communication-constrained simple binary hypothesis testing. We derive optimally tight bounds for this setting and resolve this problem. Our main technical contributions are: (i) a one-shot lower bound on the Bayes error in simple binary hypothesis testing that satisfies a crucial tensorisation property; (ii) a streamlined proof of the formula for the sample complexity of simple binary hypothesis testing without constraints, first established in~\cite{PenEtal24b}; and (iii) a reverse data-processing inequality for Hellinger-$\lambda$ divergences, generalising the results from \cite{BhaEtal21} and \cite{PenEtal23}.\footnote{A 1-page extended abstract of this paper, titled ``The Sample Complexity of Simple Binary Hypothesis Testing: Tight Bounds with Sequential Interactivity and Information Constraints,'' appears in the Proceedings of the Conference on Learning Theory, 2025.}
\end{abstract}

\begin{keywords}
Sample complexity, hypothesis testing, data-processing inequality
\end{keywords}

\section{Introduction}
Hypothesis testing is one of the central problems in statistics that concerns testing different hypotheses based on observed data. The most basic version of this problem is simple binary hypothesis testing, where we wish to decide between two hypotheses $H_0$ and $H_1$. Under hypothesis $H_0$ (resp.\ $H_1$) we observe i.i.d.\ samples from distribution $Q$ (resp.\ $P$), where $P$ and $Q$ are known. The Bayes formulation of simple binary hypothesis testing, which is the focus of this paper, assumes that the true hypothesis is chosen randomly according to a prior Bernoulli distribution $\Ber(\pi)$. This is a classical problem that has been studied for over a century, and we highlight some important results.
\begin{itemize}
    \item \emph{Optimal algorithm:} The maximum a posteriori-rule (MAP-rule) minimises the error probability. In the prior-free setting, threshold tests are optimal \cite{Ney33}.
    \item \emph{Asymptotic error probability:} The error probability decays exponentially, with the exponents characterised by information-theoretic measures such as the Kullback--Leibler divergence or Chernoff information~\cite[Chapter~16]{powu2025}.
    \item \emph{Sample complexity:} The sample-size needed to achieve an error of $\delta$ was notably unaddressed until recently, when \cite{PenEtal24b} derived a formula for all $\delta \le \min(\pi, 1-\pi)/4$. 
\end{itemize}
In addition to identifying the sample complexity, \cite{PenEtal24b} also noted the implications of their result to the setting of distributed simple binary hypothesis testing under information constraints, such as communication (or quantisation) or local differential privacy. These had been studied in prior work \cite{PenEtal23, PenEtal24a} but only for the restricted setting of constant $\pi$ and $\delta$ (specifically, $\pi = 0.5$ and $\delta = 0.1$). This paper addresses two open problems noted in \cite{PenEtal24b}: tightening the existing sample complexity bounds for information-constrained hypothesis testing, and examining whether interaction helps reduce the sample complexity. In what follows, we describe the problem setting for distributed information-constrained hypothesis testing in more detail.
\paragraph{Hypothesis testing under information constraints:}
Suppose each of $n$ agents, denoted by $s_1, \dots, s_n$, receives a single sample denoted by $X_i$ for agent $s_i$. Agent $s_i$ transmits $Y_i$ to a central server, and the central server performs a hypothesis test using $Y_1, \dots, Y_n$. The transformation of $X_i$ to $Y_i$ is subject to information constraints that are captured by a set of channels $\cT$ (Markov kernels) from $\cX$, the support of the $X_i$s, to $\cY$, the support of the $Y_i$s.   i.e., the agent $s_i$ must pick a communication channel $T^i \in \cT$. For example, $\cT$ could be the set of channels with output size $D$, denoted by $\cT_D$, or the set of all $\epsilon$-differentially private channels, denoted by $\cT_{\epsilon\text{-LDP}}$. 

We assume agents have access to private randomness as well as public randomness. The assumption of private randomness is equivalent to assuming the convexity of $\cT$, since each agent can choose any convex mixture of the channels available to use. It turns out that public randomness is of no use in this problem. To see this, suppose all agents observe a public random variable $U$ which they may use to coordinate their choices of channels. The resulting algorithm (either in the identical, non-identical, or sequential settings described below) may be thought of as a mixture of private-randomness-only algorithms depending on the realisations of $U$. It is easy to see that if the Bayes error is at most $\delta$ after averaging over $U$, then there must be a particular realisation of $U$ that makes the error at most $\delta$. This shows that there exists a private-randomness-only algorithm that is as good as any public-and-private randomness algorithm. Hence, in what follows, we disregard the public and private random variables and focus on deterministic algorithms with a convex set $\cT$.

The different settings of interest based on how the agents select their channels are described in \cref{dfn:info_hyp}. 


\begin{define}[Information-Constrained Simple Binary Hypothesis Testing]\label{dfn:info_hyp}
Let $P$ and $Q$ be a pair of distributions on a discrete set $\cX$. Let $\pi\in(0,0.5]$ and $\Theta \sim \Ber(\pi)$. When $\Theta = 0$ (resp.\ $\Theta = 1$) we have $X_i\overset{i.i.d.}{\sim}Q$ (respectively $X_i\overset{i.i.d.}{\sim}P$). Let $\delta\in(0,1)$ and $\cT$ a convex set of channels with input space $\cX$ and output space $\cY$. Consider the channel and estimator pair $(T,\hat{\theta})$ where $\mathbf{T}=(T^1,T^2,\dots,T^n)$ is a sequence of $n$ channels where all channels have the same output space $\cY$, 
and $\hat{\theta}:\cY^n\rightarrow\{0,1\}$ is the tester. 
\begin{enumerate}

   \item 
    (Unconstrained) We denote the problem of simple binary hypothesis testing without constraints between two distributions $P$ and $Q$, with prior $\pi$ and error probability $\delta$ as $\cB_B(P,Q,\pi,\delta)$. The sample complexity of simple binary hypothesis testing, denoted by $n^*(P,Q,\pi,\delta)$, is the smallest number of samples $n$ for which there exists a test $\hat \theta$ satisfying 
    \begin{equation}\label{eq:hp_def_1}
        \pi \cdot \prob_{X\sim P^{\otimes n}}[\hat{\theta}(X)\neq 1]+\alb \cdot \prob_{X\sim Q^{\otimes n}}[\hat{\theta}(X)\neq 0]\leq \delta,
    \end{equation}
    where we use the shorthand $X = (X_1, \dots, X_n)$.
    \item (Identical) 
    $\mathbf{T}=(T,T,\dots,T)$ where $T\in\cT$. We say that the pair $(T,\hat{\theta})$ solves the information constraint hypothesis testing problem in identical channel setting with $n$ samples with prior $\pi$ and error probability $\delta$ if 
    \begin{equation}\label{eq:hp_def}
        \pi \cdot \prob_{X\sim P^{\otimes n}}[\hat{\theta}(Y)\neq 1]+\alb \cdot \prob_{X\sim Q^{\otimes n}}[\hat{\theta}(Y)\neq 0]\leq \delta\,.
    \end{equation}
    In the above, we denote $(X_1,X_2,\dots,X_n)$ by $X$ and $(Y_1,Y_2,\dots,Y_n)$ by $Y$. Also, we assume that we have $Y_i=T(X_i)$. The sample complexity $\nsid(P,Q,\pi,\delta,\cT)$ of information-constrained hypothesis testing with identical channels is the smallest $n$ for which there exists a pair of channel and tester $(T,\hat{\theta})$ that solves the problem with $n$ samples. 
        \item (Non-identical and non-interactive)
        $\mathbf{T}=(T^1,T^2,\dots,T^n)$ where $T^i\in\cT$. We say that the pair $(\mathbf{T},\hat{\theta})$ solves the information-constrained hypothesis testing with non-identical (and non-interactive) channels if they satisfy \cref{eq:hp_def} where $Y_i=T^i(X_i)$. The sample complexity is denoted by $\nsnid(P,Q,\pi,\delta,\cT)$.
    \item (Sequentially interactive) 
    $\mathbf{T}=(T^1,T^2,\dots,T^n)$ and each $T^i$ is a channel from $\cX\times\cY^{i-1}$ to $\cY$, such that for any fixed $(y_1,y_2,\dots,y_{i-1})$, the stochastic map $(X_i,y_1,y_2,\dots,y_{i-1}) \mapsto Y_i$ is in $\cT$.  
    We say that the pair $(\mathbf{T},\hat{\theta})$ solves this problem if they satisfy \cref{eq:hp_def} where $Y_i=T^i(X_i,Y_1,Y_2,\dots,Y_{i-1})$. We denote the sample complexity of this problem by $\nsseq(P,Q,\pi,\delta,\cT)$.
\end{enumerate}
\end{define}
When $P$, $Q$, $\pi$, $\delta$, and $\cT$ are clear from the context, we denote the sample complexities by  $n^*$, $\nsid$, $\nsnid$ and $\nsseq$.
Naturally, we must have
\begin{equation*}
\nstar \le 
        \nsseq
        \le
        \nsnid
        \le
        \nsid.
   \end{equation*}   
The first question we address concerns the relationship between the sample complexities in the three settings described above. This question was noted as an open problem in \cite{PenEtal24b}.
\begin{question}\label{Question: seq-id}
Is the sample complexity with sequential interaction significantly smaller (i.e. smaller by more than a constant factor) compared to the sample complexity with identical channels? Adding the non-identical setting to the question, we consider if any of the inequalities are truly strict:
    \begin{equation}
        \nsseq(P,Q,\pi,\delta,\cT)
        \ll
        \nsnid(P,Q,\pi,\delta,\cT)
        \ll
        \nsid(P,Q,\pi,\delta,\cT).
    \end{equation}
\end{question}
This question is motivated by the fact that in many important settings, there is a significant difference between $\nstar$ and $\nsid$ (for example, communication constraints~\cite{PenEtal23} and local differential privacy~\cite{DucJW18}).
Hence, it is an important question whether a sequentially interactive choice of channels could lead to significant sample savings (i.e., whether $\nsseq \ll \nsid$ holds). 
Indeed, in many inference problems such as those explored in~\cite{AchEtal22}, sequential interactivity\footnote{There also exist testing problems where the choice of non-identical, non-interactive channels significantly lowers the sample complexity, i.e, $\nsnid\ll \nsid$~\cite[Remark 5.15]{PenEtal23}.} provably reduces the sample complexity~\cite{AchCLST22-interactive,PouAA24}. \Cref{Question: seq-id} asks whether it is also for true for \emph{simple binary hypothesis testing}. 

Towards answering this question, the relationship between $\nsnid$ and $\nsid$ was resolved in a very restricted setting (uniform prior, i.e, $\pi=1/2$) in \cite{PenEtal23}, where it was established that $\nsnid \asymp \nsid$ for all convex set of channels $\cT$. In this paper, we extend this result to almost all ranges of $\pi$ and $\delta$ . In fact, we show a stronger result, that $\nsseq \geq \nsid/4$, thus establishing that sequential interaction cannot give more than a constant factor improvement over the identical channels setting, answering \Cref{Question: seq-id}; see \Cref{id-nonid-channels}.\footnote{That is, while sequential interactivity might or might not help for some testing problems as established in \cite{AchCLST22-interactive,PouAA24}, it provably does not help for the problem of simple binary hypothesis testing.}

Next, we consider the special case of $\cT = \cT_D$; i.e., we look at the sample complexity of simple binary hypothesis testing under communication constraints where each channel has an output alphabet of size $D$.
Letting $\nsid$ be the sample complexity under this constraint (based on our first result, it does not matter whether the channels are chosen identically or in a sequentially-interactive manner).
We study the relationship between $\nsid$ and $n^*$, which is the classical problem of decentralised detection surveyed in~\cite{tsitsiklis1988decentralized}. 
\begin{question}\label{Question:nidcommun}
How does the communication budget $D$ affect the sample complexity of hypothesis testing for general prior? In particular, is $\nsid \lesssim n^* \max \left( 1, \frac{\log n^*}{D}\right)$, and is it tight in the worst case?
\end{question}
For reasons of clarity, we assume $\delta = \pi/16$ (or any fixed constant fraction of $\pi$) in the following discussion. The relationship between $\nsid$ and $n^*$ described in Question 2 was first proved  in \cite{PenEtal23} in a \emph{restricted parameter regime} ($\pi = 0.5$), where they established both the upper bounds and the worst-case lower bounds; see also \cite{BhaEtal21}. For general prior $\pi$,  \cite{PenEtal24b} extended this result by establishing an upper bound of $n^* \max\left(1, \frac{ \log (n^*/\pi)}{D}\right)$ and conjectured that the $1/\pi$ factor inside the logarithm should not be there.
With this context, \Cref{Question:nidcommun} asks whether the ``cost'' of communication constraints increases (or even decreases) as the prior becomes more skewed.
In this paper, we answer this question by showing that  $\nsid \lesssim n^* \max \left( 1, \frac{\log n^*}{D}\right)$ for almost the entire range of interest of the parameters, specifically, for any $\pi \le 1/2$ and any $\delta \le \pi/16$. Indeed, we show a slightly stronger upper bound that is complemented by a matching lower bound obtained by constructing a worst-case pair of distributions;
see \Cref{thm: sc_D}.

To answer these two questions, we develop two main technical results that might be of independent interest: (i) a one-shot lower bound on the error probability of simple binary hypothesis testing (\Cref{thm:one-shot}); and (ii) and a sharp reverse data-processing inequality for the Hellinger-$\lambda$ divergences (\Cref{revdatainf}). A notable application of (i) is that it yields a short and neat proof of the formula for $n^*(P, Q, \pi, \delta)$, whose discovery was the primary contribution in \cite{PenEtal24b}; see \Cref{samplecomplxsimplecor}.

\subsection{Preliminaries}
We provide some basic definitions and facts about hypothesis testing and $f$-divergences that are needed to discuss our results.
\begin{define}[$f$-divergence \cite{powu2025}]\label{dfn: f-div}
    Let $f:[0,\infty]\rightarrow\real\cup\{\infty\}$ be a convex function with $f(1)=0$. For two distributions $P$ and $Q$ we define $D_f(\mu\parallel\nu)$ to be
    \begin{equation}
        D_f(P\parallel Q)=\E_{Q}\left[f\left(\frac{dP}{dQ}(Z)\right)\mathbbm{1}_{\{\frac{dP}{dQ}(Z)<\infty\}}\right]+f'(\infty)P\left(\frac{dP}{dQ}(Z)=\infty\right).
    \end{equation}
\end{define}
\begin{fact}[Convexity of $f$-divergences]\label{fact:fdiv-cvx}
    All f-divergences are jointly convex in both inputs.
\end{fact}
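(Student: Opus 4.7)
The plan is to prove joint convexity of $D_f$ by establishing joint convexity of its pointwise integrand, the perspective function of $f$. Concretely, for a convex $f:[0,\infty]\to\real\cup\{\infty\}$, define the perspective $g:[0,\infty)^2 \to \real\cup\{\infty\}$ by $g(p,q) = q\, f(p/q)$ when $q>0$, $g(0,0)=0$, and $g(p,0)= p\,f'(\infty)$ for $p>0$. Observing that the $f$-divergence in \Cref{dfn: f-div} can be rewritten (on the discrete $\cX$) as
\begin{equation*}
    D_f(P\parallel Q) \;=\; \sum_{x \in \cX} g\bigl(P(x),\, Q(x)\bigr),
\end{equation*}
joint convexity of $D_f$ in $(P,Q)$ will follow immediately from the joint convexity of $g$, since summation is a linear operation that preserves convexity.

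The first step is thus to show that $g$ is jointly convex on $[0,\infty)^2$. On the open half-plane $q>0$, I would fix $(p_1,q_1)$ and $(p_2,q_2)$ with $q_1,q_2>0$ and $\lambda\in[0,1]$, set $\bar{p}=\lambda p_1+(1-\lambda)p_2$ and $\bar{q}=\lambda q_1+(1-\lambda)q_2$, and use the identity
\begin{equation*}
    \frac{\bar{p}}{\bar{q}} \;=\; \alpha\cdot\frac{p_1}{q_1} \;+\; (1-\alpha)\cdot\frac{p_2}{q_2},
    \qquad \alpha:=\frac{\lambda q_1}{\bar{q}}\in[0,1].
\end{equation*}
Applying convexity of $f$ to the right-hand side and then multiplying through by $\bar{q}$ gives $\bar{q}\,f(\bar p/\bar q) \le \lambda q_1 f(p_1/q_1) + (1-\lambda) q_2 f(p_2/q_2)$, which is exactly $g(\bar p,\bar q)\le \lambda g(p_1,q_1)+(1-\lambda)g(p_2,q_2)$.

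The main obstacle, and where care is needed, is the boundary behaviour when one of the $q_i$ equals $0$. Here the $f'(\infty)$ term in the extension of $g$ becomes relevant, and one must verify the two-point convexity inequality in the regimes (i) $q_1=q_2=0$, (ii) exactly one of $q_1,q_2$ is zero, using the facts that $f'(\infty)$ is the recession slope of $f$ (so $f(t)/t \to f'(\infty)$ as $t\to\infty$) and that $q f(p/q) \to p\, f'(\infty)$ as $q\downarrow 0$ with $p>0$ fixed. Once these cases are dispatched, summing the pointwise inequality over $x\in\cX$ yields
\begin{equation*}
    D_f\bigl(\lambda P_1 + (1-\lambda)P_2 \,\big\|\, \lambda Q_1 + (1-\lambda)Q_2\bigr)
    \;\le\; \lambda\, D_f(P_1\parallel Q_1) + (1-\lambda)\, D_f(P_2\parallel Q_2),
\end{equation*}
which is the claimed joint convexity. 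The core content is really the one-line perspective-function trick; everything else is bookkeeping for the singular part of the divergence.
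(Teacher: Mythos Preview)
Your argument via the perspective function is the standard and correct proof of this result. Note, however, that the paper does not actually prove \cref{fact:fdiv-cvx}: it is stated as a known fact without proof (it is a classical result, see e.g.\ \cite{powu2025}). So there is no ``paper's own proof'' to compare against; your proposal simply supplies the textbook argument where the paper omits it. The perspective trick and the pointwise summation are exactly how this is usually done, and your handling of the boundary cases $q_i=0$ via the recession slope $f'(\infty)$ is the right way to close the argument.
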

\begin{define}[Hellinger-$\lambda$ divergence]
    For any $\lambda\in(0,1)$ we define the Hellinger-$\lambda$ divergence between two distributions $P$ and $Q$ as $H_{\lambda}(P,Q)=1-\E_Q\left[\left(\frac{dP}{dQ}\right)^\lambda\right]$. We also define $\betl(P,Q)=\E_Q\left[\left(\frac{dP}{dQ}\right)^\lambda\right]=1-H_{\lambda}(P,Q)$ to be the Hellinger-$\lambda$ affinity. Hellinger-$\lambda$ is an $f$-divergence with $f(x)=1-x^\lambda$.
\end{define}

\begin{fact}[Tensorisation of Hellinger affinity]
 Hellinger affinity tensorises; i.e., for every $\lambda\in(0,1)$ and any $P$, $Q$ any $n\in\mathbb{N}$ we have
        $\betl(P^{\otimes n},Q^{\otimes n})=\left(\betl(P,Q)\right)^n.$
\end{fact}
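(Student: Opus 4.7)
The plan is a direct computation using the definition of the Hellinger-$\lambda$ affinity together with the multiplicative structure of Radon--Nikodym derivatives of product measures. To keep the argument clean, I would first reduce to a dominating measure so that the densities $p = dP/d\mu$ and $q = dQ/d\mu$ exist (one can take $\mu = (P+Q)/2$, say), and write $\betl(P,Q) = \int p^{\lambda} q^{1-\lambda}\, d\mu$, which is the standard equivalent form that avoids any ambiguity when $dP/dQ$ is infinite.

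Next, I would observe that $P^{\otimes n}$ and $Q^{\otimes n}$ are dominated by $\mu^{\otimes n}$, with densities
\begin{equation*}
\frac{dP^{\otimes n}}{d\mu^{\otimes n}}(x_1,\dots,x_n) \;=\; \prod_{i=1}^{n} p(x_i), \qquad \frac{dQ^{\otimes n}}{d\mu^{\otimes n}}(x_1,\dots,x_n) \;=\; \prod_{i=1}^{n} q(x_i).
\end{equation*}
Raising to the appropriate powers and invoking Fubini's theorem (all integrands are nonnegative, so there is no integrability issue) gives
\begin{equation*}
\betl\bigl(P^{\otimes n}, Q^{\otimes n}\bigr) \;=\; \int \prod_{i=1}^{n} p(x_i)^{\lambda} q(x_i)^{1-\lambda} \, d\mu^{\otimes n}(x_1,\dots,x_n) \;=\; \prod_{i=1}^{n} \int p^{\lambda} q^{1-\lambda}\, d\mu \;=\; \betl(P,Q)^{n},
\end{equation*}
which is exactly the claim.

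I do not anticipate any real obstacle: the only subtlety is handling the case where one of the measures is not absolutely continuous with respect to the other, which is why I would prefer the symmetric $p^\lambda q^{1-\lambda}$ formulation over the quotient formulation in the definition; this sidesteps the need to separately track the $f'(\infty)$ boundary term from \cref{dfn: f-div}. Once the reduction to a common dominating measure is in place, the tensorisation is just the statement that the integral of a product of nonnegative functions on a product space equals the product of the integrals, i.e., a single application of Fubini.
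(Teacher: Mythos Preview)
Your proposal is correct; the reduction to a common dominating measure and the single application of Fubini is exactly the standard argument, and your remark about avoiding the $f'(\infty)$ boundary term by working with $p^\lambda q^{1-\lambda}$ is the right way to make it clean. The paper itself does not supply a proof of this fact---it is stated as a preliminary without justification---so there is nothing to compare against; your write-up would serve perfectly well as the omitted proof.
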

\begin{fact}[Inequality between Hellinger-$\lambda$ divergences \cite{bar2002complexity}]\label{Hlambdaupper}
    For any pair of distributions $\mu$ and $\nu$ and
    $\forall\ 0<\alpha<\beta<1$, we have
       $ \frac{\alpha}{\beta}H_{\beta}(\mu,\nu)
        \leq
        H_{\alpha}(\mu,\nu)
        \leq
        \frac{1-\alpha}{1-\beta}H_{\beta}(\mu,\nu)$.
\end{fact}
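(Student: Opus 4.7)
The plan is to establish both inequalities via H\"older's inequality, combined with Bernoulli's inequality to convert multiplicative bounds on the Hellinger-$\lambda$ affinity into additive bounds on the Hellinger-$\lambda$ divergence. Let $p$ and $q$ denote densities of $\mu$ and $\nu$ with respect to a common dominating measure, so that $\beta_\lambda(\mu,\nu) = \int p^\lambda q^{1-\lambda}$; the integrals of $p$ and $q$ themselves both equal $1$.

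For the lower bound, I would first rewrite the integrand $p^\alpha q^{1-\alpha}$ as a product of two factors that can be handled by H\"older:
\[
p^\alpha q^{1-\alpha} = (p^\beta q^{1-\beta})^{\alpha/\beta}\cdot q^{\,1-\alpha/\beta}.
\]
Applying H\"older with exponents $\beta/\alpha$ and $\beta/(\beta-\alpha)$ then gives $\beta_\alpha(\mu,\nu) \le \beta_\beta(\mu,\nu)^{\alpha/\beta}$. The concave form of Bernoulli's inequality, $(1-t)^r \le 1-rt$ for $r \in (0,1)$ and $t\in[0,1]$, applied with $r = \alpha/\beta$ and $t = H_\beta(\mu,\nu)$, upgrades this to $\beta_\alpha \le 1 - (\alpha/\beta)H_\beta$, and rearranging yields $H_\alpha \ge (\alpha/\beta)H_\beta$.

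For the upper bound I would mirror the argument with the roles of $\alpha$ and $\beta$ swapped: the analogous algebraic identity is
\[
p^\beta q^{1-\beta} = (p^\alpha q^{1-\alpha})^{(1-\beta)/(1-\alpha)}\cdot p^{(\beta-\alpha)/(1-\alpha)}.
\]
H\"older with conjugate exponents $(1-\alpha)/(1-\beta)$ and $(1-\alpha)/(\beta-\alpha)$ now yields $\beta_\beta \le \beta_\alpha^{(1-\beta)/(1-\alpha)}$, equivalently $\beta_\alpha \ge \beta_\beta^{(1-\alpha)/(1-\beta)}$. Since $(1-\alpha)/(1-\beta) > 1$, the convex form of Bernoulli, $(1-t)^r \ge 1-rt$ for $r\ge 1$ and $t\in[0,1]$, gives $\beta_\alpha \ge 1 - \tfrac{1-\alpha}{1-\beta}H_\beta$, and thus $H_\alpha \le \tfrac{1-\alpha}{1-\beta}H_\beta$. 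The only non-routine step in the whole argument is spotting the two multiplicative decompositions of the integrand, and in particular recognising that the naive pointwise comparison of $1-x^\alpha$ with $1-x^\beta$ fails once $x>1$ (the ratio tends to $0$), so that the constraint $\int p = \int q = 1$ must be used essentially; H\"older is exactly the mechanism that exploits this normalisation. Once the decompositions are in hand, each bound reduces to a single application of H\"older followed by the appropriate Bernoulli estimate, and no property of $\mu$ and $\nu$ beyond being probability measures is used.
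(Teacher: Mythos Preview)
Your argument is correct: both decompositions and H\"older applications are valid, and the two forms of Bernoulli's inequality convert the power bounds on the affinities into the stated linear bounds on the divergences. One small remark is that the concave Bernoulli step $(1-t)^r\le 1-rt$ actually holds for all $t\le 1$ (not just $t\in[0,1]$), so there is no issue even before one checks $H_\beta\in[0,1]$.

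As for comparison with the paper: there is nothing to compare against, since the paper does not prove this statement. It is recorded as a cited fact from \cite{bar2002complexity} and used as a black box (e.g.\ in \Cref{subsec:compare} and the proof of \Cref{datarevsharp}). Your self-contained proof via H\"older and Bernoulli is a standard and clean route to this inequality, and would serve perfectly well as a supplied proof.
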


\begin{fact}[Optimal Bayes error]\label{fact: bayes}
Consider $\cB_B(P,Q,\pi,\delta)$ as in Definition~\ref{dfn:info_hyp}. Let $e_\pi(P,Q)$ denote the minimum error probability over all estimators $\hat{\theta}:\cX\rightarrow\{0,1\}$ based on a single sample $X$. The following facts are folklore~\cite{HelRav70}:

\begin{enumerate}
\item
The error of the optimal estimation is
       $e_\pi(P,Q)=\sum_{x\in \cX}\min\{\pi P(x),\alb Q(x)\}.$
    
\item
For any $\lambda\in [0,1]$ we have
       $e_\pi(P,Q)\leq \pi^\lambda {\alb}^{\lamb}\betl(P,Q).$
\end{enumerate}
\end{fact}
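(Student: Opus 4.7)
The two parts are essentially the MAP-rule calculation and a Chernoff-style bound, both classical. My plan is to handle them separately, writing out the Bayes error as a sum over the sample space and optimising pointwise, then dominating the resulting minimum by a geometric mean.

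For part (1), I would start from any deterministic estimator $\hat\theta:\cX\to\{0,1\}$ and expand its error in the Bayes formulation:
\[
\pi\,\prob_{X\sim P}[\hat\theta(X)\neq 1] + \alb\,\prob_{X\sim Q}[\hat\theta(X)\neq 0]
= \sum_{x\in\cX}\!\bigl(\pi P(x)\,\1\{\hat\theta(x)=0\} + \alb Q(x)\,\1\{\hat\theta(x)=1\}\bigr).
\]
Since the summand for each $x$ depends only on $\hat\theta(x)$, the error is minimised by choosing $\hat\theta(x)\in\argmin\{\pi P(x),\alb Q(x)\}$ pointwise. (Randomised tests cannot help, being convex combinations of deterministic ones.) Substituting back yields $e_\pi(P,Q)=\sum_{x\in\cX}\min\{\pi P(x),\alb Q(x)\}$. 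The same argument works with an arbitrary dominating measure instead of counting measure, so the discreteness assumption is inessential.

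For part (2), the key ingredient is the elementary inequality $\min\{a,b\}\le a^\lambda b^{1-\lambda}$ for $a,b\ge 0$ and $\lambda\in[0,1]$, which follows from weighted AM--GM (or just from considering the two cases $a\le b$ and $a>b$). Applying it pointwise with $a=\pi P(x)$ and $b=\alb Q(x)$ and summing gives
\[
e_\pi(P,Q) \;\le\; \sum_{x\in\cX}\bigl(\pi P(x)\bigr)^{\lambda}\bigl(\alb Q(x)\bigr)^{\lamb}
\;=\; \pi^{\lambda}\alb^{\lamb}\sum_{x\in\cX}Q(x)\!\left(\tfrac{P(x)}{Q(x)}\right)^{\!\lambda}
\;=\; \pi^{\lambda}\alb^{\lamb}\,\betl(P,Q),
\]
where in the middle step we rewrote $P(x)^\lambda Q(x)^{1-\lambda}=Q(x)(P(x)/Q(x))^\lambda$ on the support of $Q$ and observed that the contribution from $\{Q(x)=0\}$ is zero in the minimum on the left-hand side.

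There is no real obstacle; the only subtlety worth flagging is measure-theoretic bookkeeping at the boundary $\lambda\in\{0,1\}$ and on $\{dQ/dP = 0\}\cup\{dP/dQ=\infty\}$, handled by the $f'(\infty)$ convention in \Cref{dfn: f-div}. Because of this, I would write the proof in the general (not-necessarily-discrete) setting using $dP/dQ$ and a dominating measure, so that both bullets remain valid without changes, and then specialise to the discrete $\cX$ of the statement.
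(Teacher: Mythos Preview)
Your argument is correct and is exactly the standard derivation: pointwise optimisation of the MAP rule for part (1), and the inequality $\min\{a,b\}\le a^{\lambda}b^{1-\lambda}$ summed over $\cX$ for part (2). The paper, however, does not supply its own proof of this fact; it is stated as folklore with a reference to~\cite{HelRav70} and used without further justification. So there is nothing to compare against, and your write-up would serve perfectly well as the omitted proof.
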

\paragraph{Notation:}
For two real-valued non-negative functions on an arbitrary space $\mathcal{X}$, i.e. $f,g:\mathcal{X}\rightarrow\mathbb{R}$, we use $f(x)\asymp g(x)$ to indicate that there exists universal constants $C_1,C_2>0$ s.t. $\ C_1 g(x) \leq f(x) \leq C_2 g(x)$ for all $x\in \cX$. We use $f(x)\lesssim g(x)$ to indicate that there is a universal constant $C>0$ such that $f(x)\leq C g(x)$ for all $x\in \cX$; The relation $f(x) \gtrsim g(x)$ is defined analogously.
For $t\in [0,1]$ we define $\Bar{t}:=1-t$. 

\subsection{Main Results}
Our first result is a lower bound on $e_\pi(P,Q)$ from \cref{fact: bayes}, stated informally below: 
\begin{thr}[One-shot lower bound on the Bayes error (informal)]
    \label{thm:one-shot}
    Let $P,Q$ be a pair of distributions and $\pi\in(0,\frac{1}{2}]$. Also, let $e_\pi(P,Q)=\delta$, which is assumed to satisfy $\delta \in (0,\pi)$. Then there exists $\ls=\ls(\pi,\delta)\in [0.5,1)$ such that
    \begin{equation}\label{intr:ealphalower}
         \delta
         \geq 
         \frac{1}{4}\pi^{\ls} \alb^{\lsb} \left(\betls (P,Q)\right)^2.
    \end{equation}
\end{thr}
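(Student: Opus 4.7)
The plan is to establish an explicit upper bound on $\betl(P,Q)$ valid for every $\lambda\in(0,1)$ that involves only $\delta,\pi$, and $\alb$, and then to pick $\ls=\ls(\pi,\delta)$ that makes the bound as tight as possible; rearrangement then yields the claim.

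First I would tie $\betl(P,Q)$ to $\delta$ by splitting around the MAP threshold. Let $L:=dP/dQ$ and $\tau:=\alb/\pi$; by \cref{fact: bayes}, $\delta=\pi\,\prob_{P}(L\le\tau)+\alb\,\prob_{Q}(L>\tau)$, so $\prob_{P}(L\le\tau)\le\delta/\pi$ and $\prob_{Q}(L>\tau)\le\delta/\alb$. Splitting
\begin{equation*}
\betl(P,Q)=\E_{Q}\!\left[L^{\lambda}\1_{\{L\le\tau\}}\right]+\E_{Q}\!\left[L^{\lambda}\1_{\{L>\tau\}}\right],
\end{equation*}
and applying Hölder's inequality with conjugate exponents $1/\lambda$ and $1/(1-\lambda)$ to the factorisation $L^{\lambda}\1_{A}=(L\1_{A})^{\lambda}(\1_{A})^{1-\lambda}$ gives $\E_{Q}[L^{\lambda}\1_{A}]\le\prob_{P}(A)^{\lambda}\prob_{Q}(A)^{1-\lambda}$. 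Discarding $\prob_{Q}(\{L\le\tau\})^{1-\lambda}\le 1$ in the first summand and $\prob_{P}(\{L>\tau\})^{\lambda}\le 1$ in the second therefore yields the clean estimate
\begin{equation*}
\betl(P,Q)\;\le\;\left(\tfrac{\delta}{\pi}\right)^{\lambda}+\left(\tfrac{\delta}{\alb}\right)^{1-\lambda}\qquad\text{for every }\lambda\in(0,1).
\end{equation*}

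Next I would pick $\ls$ to equalise the two summands, i.e., as the unique solution of $(\delta/\pi)^{\ls}=(\delta/\alb)^{\lsb}$, which after taking logarithms is $\ls=\log(\alb/\delta)/\log(\pi\alb/\delta^{2})$ and depends only on $\pi$ and $\delta$. Since $\delta<\pi\le\alb$, both $\log(\pi/\delta)$ and $\log(\alb/\delta)$ are strictly positive, so $\ls<1$; moreover $\ls\ge 1/2$ reduces to $\log(\alb/\delta)\ge\log(\pi/\delta)$, i.e.\ $\alb\ge\pi$, which holds because $\pi\le 1/2$. Hence $\ls\in[1/2,1)$. With this choice one has $\betls(P,Q)\le 2(\delta/\pi)^{\ls}$; squaring both sides and using the identity $(\delta/\pi)^{\ls}(\delta/\alb)^{\lsb}=\delta/(\pi^{\ls}\alb^{\lsb})$ rearranges exactly to $\delta\ge\tfrac{1}{4}\pi^{\ls}\alb^{\lsb}\betls(P,Q)^{2}$, which is the claim.

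Taken individually, every step is an elementary Hölder or algebraic manipulation; the main conceptual decisions are (i) to split $\betl$ at the MAP threshold $L=\alb/\pi$ rather than at $L=1$ or some symmetric location, and (ii) to choose $\ls$ by balancing the two tail terms so that the free bound becomes sharp in the geometric-mean sense. I expect these two design choices to be the only real obstacle; the Hölder step and the verification that $\ls\in[1/2,1)$ are both routine, with the latter relying critically on the hypotheses $\pi\le 1/2$ and $\delta<\pi$.
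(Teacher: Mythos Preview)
Your proof is correct and arrives at exactly the same key inequality $\betls(P,Q)\le 2(\delta/\pi)^{\ls}$ with exactly the same choice of $\ls$, but it reaches that inequality by a more direct route than the paper.

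The paper first reduces to the Bernoulli case via the data-processing inequality (mapping $P,Q$ through the indicator of the MAP set $A=\{L\le\tau\}$), and then analyses the constraint $e_\pi(p,q)=\delta$ for Bernoulli pairs as two explicit line segments in $[0,1]^2$, maximising $p/q$ and $\qb/\pb$ separately over those segments. Your H\"older step $\E_Q[L^\lambda\1_A]\le P(A)^\lambda Q(A)^{1-\lambda}$ is precisely this data-processing reduction, and your subsequent bounds $P(L\le\tau)\le\delta/\pi$ and $Q(L>\tau)\le\delta/\alb$ replace the paper's line-segment optimisation in one stroke. The balancing choice of $\ls$ is identical in both proofs. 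What you gain is brevity and the avoidance of two auxiliary lemmas; what the paper's decomposition buys is a slightly sharper intermediate bound $(2-\delta/\alb)(\delta/\pi)^{\ls}$ and a stand-alone Bernoulli lemma that could be reused elsewhere.
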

The inequality in \cref{intr:ealphalower} shows that the classical upper bound in \cref{fact: bayes} is tight in a certain sense, when $\lambda$ is chosen carefully. 
The crucial advantage of this bound compared to all other existing bounds on $e_\pi(P,Q)$ in the literature, such as Fano's bound, is that the right hand side tensorises. This makes it an ideal tool for deriving lower bounds on the sample complexity.  A particular application to $P^{\otimes n}$ and $Q^{\otimes n}$ leads to our next result, \cref{samplecomplxsimplecor}, which recovers the exact bounds for the sample complexity of simple binary hypothesis testing from \cite{PenEtal24b}.

\begin{restatable}[Sample complexity of simple binary hypothesis testing]{thr}{samplecomplxsimplecor}\label{samplecomplxsimplecor}
Consider $\cB_B(P,Q,\pi,\delta)$ as in \cref{dfn:info_hyp} where $\pi\in(0,0.5]$ and $\delta\in(0,\frac{1}{16}\pi]$. For $\ls=\frac{\log(\frac{\alb}{\delta})}{\log(\frac{\alb}{\delta})+\log(\frac{\pi}{\delta})} \in [0.5,1)$, we have
\begin{equation}\label{samplecomplxsimple}
    \left\lceil\frac{1}{2}\ls\frac{\log(\frac{\pi}{\delta})}{\log\left(\frac{1}{\betls(P,Q)}\right)}\right\rceil
    \leq
    n^*(P,Q,\pi,\delta)
    \leq
    \left\lceil 2\ls\frac{\log(\frac{\pi}{\delta})}{\log\left(\frac{1}{\betls(P,Q)}\right)}\right\rceil.
\end{equation}
\end{restatable}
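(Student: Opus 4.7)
The plan is to obtain both bounds by applying the one-shot results to the $n$-fold product distributions $P^{\otimes n}$ and $Q^{\otimes n}$ and then solving for $n$. The upper bound will come from the classical inequality $e_\pi \le \pi^\lambda \bar\pi^{\bar\lambda} \beta_\lambda(P,Q)$ in \cref{fact: bayes}, and the lower bound will come from the newly established \cref{thm:one-shot}. The key algebraic fact that makes both sides match (up to the stated constants) is the identity
\begin{equation*}
    \lambda_* \log(\pi/\delta) \;=\; \bar\lambda_* \log(\bar\pi/\delta),
\end{equation*}
which follows immediately from the definition $\lambda_* = \log(\bar\pi/\delta)/(\log(\bar\pi/\delta)+\log(\pi/\delta))$. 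A direct manipulation using $\lambda_*+\bar\lambda_*=1$ then gives the compact form $\log(\pi^{\lambda_*}\bar\pi^{\bar\lambda_*}/\delta) = 2\lambda_*\log(\pi/\delta)$, which is what appears on the right-hand side of \cref{samplecomplxsimple}.

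For the upper bound, I will invoke the second part of \cref{fact: bayes} applied to $P^{\otimes n}$ and $Q^{\otimes n}$ with $\lambda=\lambda_*$, and use the tensorisation of Hellinger-$\lambda$ affinity to get $e_\pi(P^{\otimes n},Q^{\otimes n}) \le \pi^{\lambda_*}\bar\pi^{\bar\lambda_*}\,\beta_{\lambda_*}(P,Q)^n$. Setting the right-hand side to be at most $\delta$, taking logarithms, and substituting the identity above yields exactly $n \ge 2\lambda_*\log(\pi/\delta)/\log(1/\beta_{\lambda_*}(P,Q))$; taking the ceiling gives the stated upper bound on $n^*$.

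For the lower bound, I will apply \cref{thm:one-shot} to the product distributions. Writing $\delta_n = e_\pi(P^{\otimes n},Q^{\otimes n})$ and noting $\beta_{\lambda_*}(P^{\otimes n},Q^{\otimes n})=\beta_{\lambda_*}(P,Q)^n$, the one-shot lower bound gives
\begin{equation*}
    \delta_n \;\ge\; \tfrac{1}{4}\,\pi^{\lambda_*}\bar\pi^{\bar\lambda_*}\,\beta_{\lambda_*}(P,Q)^{2n}.
\end{equation*}
Rearranging and using the same identity, this becomes $n \ge (\lambda_*\log(\pi/\delta_n) - \log 2)/\log(1/\beta_{\lambda_*}(P,Q))$. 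Since any $n < n^*(P,Q,\pi,\delta)$ forces $\delta_n > \delta$, the desired lower bound on $n^*$ follows once the $-\log 2$ term is absorbed.

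The main (and essentially only) subtlety is absorbing that additive $\log 2$ into a constant factor, which is precisely where the assumption $\delta \le \pi/16$ is used: it guarantees $\log(\pi/\delta) \ge 4\log 2$, and since $\lambda_* \ge 1/2$ we get $\lambda_*\log(\pi/\delta) \ge 2\log 2$. Hence $\lambda_*\log(\pi/\delta) - \log 2 \ge \tfrac{1}{2}\lambda_*\log(\pi/\delta)$, delivering the factor $1/2$ in the stated lower bound after taking ceilings. I expect no technical obstacles beyond this; the entire argument reduces to substituting the one-shot bounds into the $n$-fold setting and bookkeeping with the above identity.
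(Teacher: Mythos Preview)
Your upper-bound argument is correct and identical to the paper's. The lower-bound argument has a gap in how you invoke the one-shot bound.

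You write $\delta_n = e_\pi(P^{\otimes n},Q^{\otimes n})$ and claim $\delta_n \ge \tfrac14\,\pi^{\ls}\alb^{\lsb}\,\betls(P,Q)^{2n}$ with the $\ls$ of \cref{samplecomplxsimplecor}. But \cref{thm:one-shot} (and its formal version \cref{thm: e_alpha_lb}) produces $\ls=\ls(\pi,\delta')$ from the value $\delta'$ you feed it; feeding it $\delta_n$ gives $\ls(\pi,\delta_n)$, not $\ls(\pi,\delta)$. Your subsequent use of the identity $\ls\log(\pi/\delta)=\lsb\log(\alb/\delta)$ and the absorption $\ls\log(\pi/\delta)\ge 2\log 2$ both tacitly assume $\ls=\ls(\pi,\delta)$, so the rearranged inequality does not say what you need. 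The sentence ``any $n<n^*$ forces $\delta_n>\delta$'' does not rescue the argument either: it leaves you with a bound in terms of $\delta_n$ and $\beta_{\ls(\pi,\delta_n)}$, and there is no clean monotonicity that converts this into the claimed bound in terms of $\delta$ and $\beta_{\ls(\pi,\delta)}$.

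The fix is exactly what the paper does: use the ``for any $\delta\ge e_\pi(P,Q)$'' clause of \cref{thm: e_alpha_lb} with the \emph{target} $\delta$. That is, take any $n$ with $e_\pi(P^{\otimes n},Q^{\otimes n})\le\delta$ (in particular $n=n^*$) and apply the one-shot bound with this $\delta$, so that $\ls=\ls(\pi,\delta)$ as in the theorem statement. You then get $\delta\ge\tfrac14\,\pi^{\ls}\alb^{\lsb}\,\betls(P,Q)^{2n}$ directly, and your identity and $\log 2$ absorption now apply verbatim to give $n\ge\tfrac12\ls\log(\pi/\delta)/\log(1/\betls(P,Q))$, hence the ceiling lower bound on $n^*$.
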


The above theorem reproduces all the results about the sample complexity of simple binary hypothesis testing in \cite{PenEtal24b} in one neat formula. The proof of this result in \cite{PenEtal24b} consists of three separate proofs, with entirely different strategies: (i) when $\delta \le  \pi^2$, the proof uses inequalities between the Hellinger-$1/2$ divergence and the total variation (TV) distance; (ii) when $\delta \in (\pi^2, \pi/100)$, it relies on a novel argument based on error and accuracy amplification; and (iii) when $\delta \in (\pi/100, \pi/4]$ the proof requires a technical $f$-divergence inequality between Jensen--Shannon $f$-divergences and Hellinger-$\lambda$ divergences. The main advantage of our result is that it does away with the three (somewhat artificial) regimes and leads to a formula in a simple and straightforward manner. We also note that the constants in the lower and upper bounds in \cref{samplecomplxsimplecor} are within a factor of 4, leading to a tighter and potentially practically useful characterisation of the true sample complexity.


Our next result is another application of our one-shot lower bound on the Bayes error, which gives a complete answer to \cref{Question: seq-id}.
\begin{restatable}[Sequential interactivity does not help]{thr}{idnonidchannels}\label{id-nonid-channels}
  Let $\cT$ be any convex set of channels. For an arbitrary pair of distributions $P$ and $Q$ and a set of channels $\cT$, let $\pi\in(0,0.5]$ and $\delta\in(0,\frac{1}{16}\pi]$. Then 
        $\nsseq
        \geq
        \frac{1}{4}\nsid.$
\end{restatable}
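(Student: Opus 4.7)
The plan is to leverage the one-shot lower bound on the Bayes error (\cref{thm:one-shot}) together with a tensorisation inequality for the Hellinger-$\lambda$ affinity under sequentially interactive protocols. Let $(\mathbf{T},\hat{\theta})$ be any sequential protocol on $n = \nsseq$ samples achieving Bayes error $\delta' \leq \delta$ on the joint distributions $(P_{Y^n}, Q_{Y^n})$; the goal is to exhibit an identical-channel protocol on $4n$ samples with error at most $\delta$, which yields $\nsid \leq 4\nsseq$.

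Applying \cref{thm:one-shot} to $(P_{Y^n}, Q_{Y^n})$ yields $\ls = \ls(\pi,\delta') \in [0.5,1)$ with $\delta' \geq \tfrac{1}{4}\pi^{\ls}\alb^{\lsb}\left(\betls(P_{Y^n},Q_{Y^n})\right)^2$. The key technical step is a lower-bound tensorisation: for any sequentially interactive protocol,
\[
    \betls(P_{Y^n}, Q_{Y^n}) \;\geq\; (\beta_{\min})^n,\qquad \beta_{\min} := \inf_{T\in\cT}\betls(TP,TQ).
\]
I would prove this by combining the Radon--Nikodym chain rule $\frac{dP_{Y^n}}{dQ_{Y^n}}=\prod_{i=1}^n \frac{dP_{Y_i\mid Y^{i-1}}}{dQ_{Y_i\mid Y^{i-1}}}$ with the tower property of conditional expectation: conditioning on $Y^{n-1}=y^{n-1}$, the stochastic map $X_n\mapsto Y_n$ is some channel $T^n_{y^{n-1}} \in \cT$, so the conditional affinity satisfies $\betls(T^n_{y^{n-1}} P, T^n_{y^{n-1}} Q) \geq \beta_{\min}$ pointwise; factoring this constant out of the outer expectation and inducting on $n$ yields the claim.

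Combining the two inequalities gives $(\beta_{\min})^{2n} \leq 4\delta'/(\pi^{\ls}\alb^{\lsb})$. Consider now the identical-channel protocol using a near-minimiser $T^\star$ of $\betls(TP,TQ)$ repeated on $4n$ samples; by item~2 of \cref{fact: bayes} and tensorisation of Hellinger affinity on product distributions, its Bayes error is at most
\[
    \pi^{\ls}\alb^{\lsb}(\beta_{\min})^{4n} = \pi^{\ls}\alb^{\lsb}\left((\beta_{\min})^{2n}\right)^2 \leq \frac{16(\delta')^2}{\pi^{\ls}\alb^{\lsb}}.
\]
It remains to verify $\pi^{\ls}\alb^{\lsb} \geq 16\delta'$, which forces the right-hand side to be at most $\delta'\leq \delta$. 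Using the closed form $\ls = \log(\alb/\delta')/[\log(\pi/\delta')+\log(\alb/\delta')]$, a short calculation gives $\log(\pi^{\ls}\alb^{\lsb}/\delta') = 2uv/(u+v)$ with $u := \log(\pi/\delta')$ and $v := \log(\alb/\delta')$; since $\pi\leq 1/2$ implies $\alb\geq\pi$ and hence $v\geq u$, one has $2uv/(u+v)\geq u$, so $\pi^{\ls}\alb^{\lsb} \geq \delta' \cdot (\pi/\delta') = \pi \geq 16\delta'$ using $\delta \leq \pi/16$. The main obstacle is the tensorisation step: it must correctly handle the dependence of $T^i$ on the realised past outputs and uses crucially that $T^i_{y^{i-1}} \in \cT$ for each realisation; the remaining work amounts to routine algebra with the one-shot bound and the explicit formula for $\ls$.
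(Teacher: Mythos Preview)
Your proposal is correct and follows essentially the same route as the paper's proof. Both arguments hinge on the same two ingredients: the one-shot lower bound (\cref{thm:one-shot}) applied to the joint law $(P_{Y^n},Q_{Y^n})$, and the sequential tensorisation inequality $\betls(P_{Y^n},Q_{Y^n})\ge (\inf_{T\in\cT}\betls(TP,TQ))^n$, which is exactly the paper's \cref{idchannels} (your proof sketch for it matches the paper's induction). The only cosmetic differences are that (i) the paper fixes $\ls=\ls(\pi,\delta)$ throughout, whereas you take $\ls=\ls(\pi,\delta')$ with $\delta'=e_\pi(P_{Y^n},Q_{Y^n})$; and (ii) the paper argues contrapositively---bounding $\beta_*^{\nsid-1}$ from below via insufficiency of $\nsid-1$ identical samples and then comparing with the one-shot bound---whereas you directly exhibit a $4n$-sample identical-channel protocol. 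Your harmonic-mean computation $\log(\pi^{\ls}\alb^{\lsb}/\delta')=2uv/(u+v)\ge u$ is a clean way to extract the needed inequality $\pi^{\ls}\alb^{\lsb}\ge\pi\ge 16\delta'$.

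One small imprecision: you write the identical-channel error as $\pi^{\ls}\alb^{\lsb}(\beta_{\min})^{4n}$, but with a \emph{near}-minimiser $T^\star$ the bound is $\pi^{\ls}\alb^{\lsb}(\betls(T^\star P,T^\star Q))^{4n}$, which exceeds this by a factor $(1+\epsilon)^{4n}$. Since your final inequality $16\delta'\le\pi^{\ls}\alb^{\lsb}$ can be tight (when $\delta'=\delta=\pi/16$ and $\ls=1/2$), you need the same $\epsilon\to 0$ limiting step that the paper carries out explicitly; this is routine but should be stated.
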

Note that from the definition of $\nsseq$ and $\nsid$ it is immediate that we have $\nsseq\leq \nsid$. The above theorem asserts that these two quantities are within a small factor of 4, which may have implications for the design of practical algorithms. We also remark that \Cref{thm:one-shot} was crucial in proving this result. Indeed, adapting the proof technique from \cite{PenEtal24b} fails for non-identical channels (and thus, also for sequentially interactive channels) because the error amplification trick used when $\delta \in (\pi^2,\pi/100)$ in \cite{PenEtal24b} only works when the channels are identical.


Next, we turn our attention to the problem of hypothesis testing under communication constraints. In this setting, we assume that $\cT=\cT_D$, where $\cT_D$ is the set of all channels with output space $\cY$ such that $\lvert\cY\rvert=D$ for some $D\geq 2$. From \cref{samplecomplxsimplecor} we know that the sample complexity of hypothesis testing characterised by the Hellinger-$\lambda$ divergence between them. So the sample complexity under $\cT_D$ depends on the largest the Hellinger divergence between the two output distributions after the quantisation step. This problem was studied for the Hellinger-$1/2$ divergence in \cite{PenEtal23} via a ``reverse data-processing inequality''. Our main technical contribution is significantly generalising this inequality to a large class of ``TV-like $f$-divergences'' (defined formally in \cref{def: tv_f}), which include all Hellinger-$\lambda$ divergences.

\begin{thr}[Reverse data-processing inequality for TV-like divergences (informal)]\label{revdatainf}
    Let $P$ and $Q$ be two discrete distributions, and let $D_f$ be a TV-like f-divergence. Then for any $D\geq 2$ there exists a threshold quantiser $T^*\in \cT_D$ such that the following inequality holds \begin{equation}
        D_f(T^*P\parallel T^*Q)
        \gtrsim
        \min\left\{1,\frac{D}{\log(\frac{1}{D_f(P\parallel Q)})}\right\}D_f(P\parallel Q).
    \end{equation}
\end{thr}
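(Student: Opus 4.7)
The plan is to build $T^*$ as an explicit threshold quantiser on the likelihood ratio $L(x) = dP/dQ(x)$, generalising the Hellinger-$1/2$ construction of \cite{PenEtal23, BhaEtal21} to any TV-like $D_f$. Writing $\rho = D_f(P\parallel Q)$, $T^*$ will group together atoms whose log-likelihood ratios lie in the same cell of a partition of $\R$ into $D$ intervals, with $D-2$ cells of equal log-width covering the ``core'' $\log L \in [-M,M]$ and the two extreme cells covering the tails, where $M \asymp \log(1/\rho)$. The argument has two pieces: a localisation step showing that the tails contribute only an $O(1)$ fraction of $\rho$, and a per-cell comparison showing that each core cell's contribution to $D_f(T^*P\parallel T^*Q)$ recovers at least a $1/\max(1,\ell)$ fraction of its contribution to $\rho$, where $\ell = 2M/(D-2)$ is the log-width of a core cell.

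First I would prove the localisation lemma: the contribution to $\rho$ from atoms with $L(x) \notin [2^{-M},2^M]$ is bounded up to universal constants by the contribution inside the core, once $M = C\log(1/\rho)$ for a sufficiently large constant $C$. The structural property of \cref{def: tv_f} driving this bound is an integral or local-comparison representation of $D_f$ in terms of TV-type quantities, together with the fact that $P$ and $Q$ are probability measures on $\cX$, which forces both $P(\{L>2^M\})$ and $Q(\{L<2^{-M}\})$ to decay geometrically in $M$. Next, partition $[2^{-M},2^M]$ geometrically into $D-2$ core cells of log-width $\ell$, letting the two extreme cells cover the remaining tails; this defines a threshold quantiser $T^* \in \cT_D$.

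The third step is the per-cell lower bound. On a core cell $B_j$, the output likelihood ratio $T^*P(j)/T^*Q(j)$ equals the $Q$-weighted average of $L$ over $B_j$ and therefore lies within a factor $2^\ell$ of $L(x)$ for every $x \in B_j$. A key technical lemma, which is the content of the TV-like hypothesis, will give
\begin{equation*}
Q(B_j)\,f\!\left(\frac{T^*P(j)}{T^*Q(j)}\right)
\;\gtrsim\;
\frac{1}{\max(1,\ell)} \sum_{x \in B_j} Q(x)\,f(L(x)).
\end{equation*}
Summing over $j$ and combining with an analogous bound for the two extreme cells yields the claimed lower bound on $D_f(T^*P\parallel T^*Q)$, since $1/\max(1,\ell) \asymp \min(1, D/\log(1/\rho))$.

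The main obstacle will be the per-cell inequality above. Forward data processing gives the reverse direction for free via Jensen, but the direction we need fails in general without the TV-like hypothesis: with only convexity, $f$ can be pathologically non-smooth on $B_j$ and the two sides can differ by an arbitrarily large factor. The TV-like property in \cref{def: tv_f} is precisely what is needed to compare the per-cell $f$-divergence contribution with a rescaled total-variation contribution on $B_j$, for which the reverse DPI on a two-bin partition is elementary and goes back to the Hellinger-$1/2$ analysis of \cite{PenEtal23}. Extending this local comparison uniformly across the class of TV-like divergences, and combining it cleanly with the localisation step so that the extreme cells are not wasted, is where the main technical effort lies.
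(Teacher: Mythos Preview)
Your geometric partition of $\log L$ and the claimed per-cell inequality have a genuine gap near $L = 1$, and the TV-like hypothesis does not close it. Take any cell $B_j$ whose likelihood-ratio range contains $1$, and place $Q$-mass on two atoms with $L$-values $1/2$ and $2$ in proportions $2{:}1$. The $Q$-weighted average of $L$ over $B_j$ then equals $1$, so $Q(B_j)\,f(T^*P(j)/T^*Q(j)) = Q(B_j)\,f(1) = 0$, while $\sum_{x\in B_j} Q(x)f(L(x)) > 0$. Splitting at $L = 1$ does not fix this: on a cell $[a,1)$, place most of the $Q$-mass near $L=1$ and a fixed small fraction at $L=a$; the output average tends to $1$ so the left-hand side of your per-cell inequality tends to $0$, while the right-hand side stays bounded away from $0$. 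The TV-like definition gives no control on $f$ near $1$ beyond convexity and $f(1)=0$; in particular it does not let you compare $f$ with total variation there, since for Hellinger-type $f$ one has $f(x)\asymp(x-1)^2$ near $1$, which is not commensurate with $|x-1|$.

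The paper avoids this obstacle by partitioning the \emph{range} of $Z := f(L)\,\mathbbm{1}_{\{L\le b\}}$ rather than $\log L$. Since $Z\in[0,B)$ with $\E_Q[Z]\gtrsim\rho$ whenever the bounded region dominates, the reverse Markov inequality of~\cite{PenEtal23} supplies thresholds $0\le\eta_1\le\cdots\le\eta_D=B$ with $\sum_i \eta_i\,\P(Z\in[\eta_i,\eta_{i+1}))\gtrsim\min\{1,D/R\}\cdot\rho$. The key observation is that each preimage $f^{-1}([\eta_i,\eta_{i+1}))$ is a union of at most two intervals, one on each side of the minimiser $L=1$ (by convexity of $f$), and the quantiser assigns these to \emph{separate} bins. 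On each such interval the $Q$-average of $L$ stays in the interval, so $f$ of the average is automatically at least $\eta_i$; the per-bin comparison is then immediate with constant $1$ and no factor $1/\ell$ appears. The region $L>b$ is handled separately by a single binary quantiser using the linearity $C_1 x\le f(x)\le C_2 x$. Thus the TV-like structure enters only through boundedness on $[0,b]$ (so that reverse Markov applies to $Z$) and linear growth on $(b,\infty)$ (so that one bin suffices for the tail), not through any local comparison of $f$ with total variation on geometric cells.
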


The above result shows that when $D_f(P \parallel Q)$ is small enough and $D=2$, we can preserve the original $f$-divergence up to logarithmic factors. When $D$ becomes sufficiently large; i.e, $D \gtrsim \log\left(\frac{1}{D_f(P \parallel Q)}\right)$, we can preserver the original $f$-divergence up to constant factors.

We can also interpret \cref{revdatainf} as a tightened analysis of the binary information distillation function which was studied in \cite{BhaEtal21}. More precisely, let $X$ be a binary random variable and let $[Y]_M$ be $Y$ quantised to $M$ levels. \cite{BhaEtal21} analysed the function 
\begin{align*}
  \mathrm{ID}_M(2,\beta)\triangleq\sup_{P_{XY}: I(X;Y)\geq\beta} I(X,[Y]_M)  
\end{align*}
and showed that $\mathrm{ID}_M(2,\beta)\gtrsim\min\{\frac{M}{\log(\frac{1}{\beta})},1\}\beta$. This result was used in \cite{PenEtal24b} to derive (loose) achievable bounds on $\nsid$. 
We can tighten the result in \cite{BhaEtal21} to incorporate additional information about $X$ using \cref{revdatainf} to state  
\begin{align*}
  \mathrm{ID}_M(2,\beta,\pi)\triangleq\sup_{\substack{P_{XY}: I(X;Y)\geq\beta\\
  X \sim \Ber(\pi)}} I(X,[Y]_M)\quad  \gtrsim\min\left\{\frac{M}{\log\frac{h(\pi)}{\beta}},1\right\}\beta\,,
\end{align*}
    where $h(.)$ is the binary entropy function. As the bias of $X$ goes to $0$, our result improves upon  \cite{BhaEtal21} by replacing $\log(1/\beta)$ in the denominator with $\log\frac{h(\pi)}{\beta}$.
    This tightened analysis relies on a technical result in \cite{PenEtal24b} that shows an equivalence between mutual information and the Hellinger-$\lambda$ divergence for a suitable choice of $\lambda$. 

Applying \cref{revdatainf} for Hellinger-$\lambda$ divergences, we answer \cref{Question:nidcommun} affirmatively below.

\begin{restatable}[Sample complexity under $\cT_D$ constraint]{thr}{samplecompcommunication}\label{thm: sc_D}
    Let $P$ and $Q$ be two discrete probability distributions on $\cX$ and $D\geq 2$. Assume that $H_{1/2}(P,Q)\leq 0.25$. Then for every $\pi\in(0,0.5]$ and $\delta\in(0,\frac{1}{16}\pi]$, if we denote $n^*(P,Q,\pi,\delta)$ by $n^*$,  we have
    \begin{equation}
        \nsseq(P,Q,\pi,\delta,\cT_D)
        \lesssim
        \max\left\{1,\frac{\log\left(\frac{1}{H_{\ls}(P,Q)}\right)}{D}\right\}n^*
        \lesssim
        \max\left\{1,\frac{\log(n^*)}{D}\right\}n^*.
    \end{equation}
    Moreover, for every $\pi$ and $\delta$ satisfying the above conditions, there exist $P$ and $Q$ such that 
    \begin{equation}
        \nsseq(P,Q,\pi,\delta,\cT_D)
        \gtrsim
        \max\left\{1,\frac{\log(\frac{1}{H_{\ls}(P,Q)})}{D}\right\}n^*.
    \end{equation}
\end{restatable}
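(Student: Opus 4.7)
Since $\nsseq \le \nsid$, I would upper bound $\nsid$. From \Cref{dfn:info_hyp}, $\nsid(P,Q,\pi,\delta,\cT_D) = \min_{T \in \cT_D} n^*(TP, TQ, \pi, \delta)$, so I would apply \Cref{samplecomplxsimplecor} to the pair $(TP, TQ)$. Using $-\log(1-x) \asymp x$ for $x \le 1/2$ (which holds for $H_{\ls}(TP,TQ)$ via data processing together with \Cref{Hlambdaupper} and the hypothesis $H_{1/2}(P,Q) \le 1/4$), this gives
\[
n^*(TP, TQ, \pi, \delta) \asymp \frac{\ls \log(\pi/\delta)}{H_{\ls}(TP, TQ)}, \qquad \nsid \asymp \frac{\ls \log(\pi/\delta)}{\max_{T \in \cT_D} H_{\ls}(TP, TQ)}.
\]
Since $H_{\ls}$ is a TV-like $f$-divergence, \Cref{revdatainf} produces a threshold quantiser $T^* \in \cT_D$ with $H_{\ls}(T^*P, T^*Q) \gtrsim \min\{1, D/\log(1/H_{\ls}(P,Q))\}\, H_{\ls}(P,Q)$. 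Dividing by the formula $n^* \asymp \ls \log(\pi/\delta)/H_{\ls}(P,Q)$ (again from \Cref{samplecomplxsimplecor}) yields $\nsid \lesssim \max\{1, \log(1/H_{\ls}(P,Q))/D\}\, n^*$. The second inequality follows by substituting $H_{\ls}(P,Q) \gtrsim \log(\pi/\delta)/n^*$ into $\log(1/H_{\ls}(P,Q))$.

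\textbf{Lower bound.} By \Cref{id-nonid-channels}, $\nsseq \gtrsim \nsid$, so it suffices to exhibit $P,Q$ achieving the claimed lower bound on $\nsid$. Using $\nsid \asymp \ls \log(\pi/\delta)/\max_{T \in \cT_D} H_{\ls}(TP, TQ)$, the task reduces to constructing $P,Q$ for which \emph{every} $T \in \cT_D$ satisfies $H_{\ls}(TP, TQ) \lesssim \min\{1, D/\log(1/H_{\ls}(P,Q))\}\, H_{\ls}(P,Q)$, i.e.\ distributions witnessing tightness of the reverse data-processing inequality. I would place $P,Q$ on an alphabet of size $N \gg D$ with a geometric profile of likelihood ratios whose logarithms span $\Theta(\log N)$ levels, calibrated so that $H_{\ls}(P,Q)$ attains any desired small value and $\log(1/H_{\ls}(P,Q)) \asymp \log N$. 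For such distributions, any $D$-output channel is forced to coarsen the likelihood ratios into $D$ bins, and a Taylor expansion around the mean ratio shows that only an $O(D/\log N)$ fraction of the Hellinger-$\ls$ mass can survive this coarsening. The construction extends the symmetric $\pi=1/2$ instance of \cite{PenEtal23} by reweighting the two halves of the support in proportion to $\ls$ and $\lsb$.

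\textbf{Main obstacle.} The upper bound is a near-mechanical composition of \Cref{samplecomplxsimplecor}, \Cref{id-nonid-channels}, and \Cref{revdatainf}. The principal technical difficulty is the lower-bound construction for a general skewed prior: for $\pi \ll 1/2$ the relevant Hellinger order $\ls$ drifts towards $1$, so the adversarial $(P,Q)$ must be asymmetric in a manner matched to $\ls$ while simultaneously (i) realising a prescribed small value of $H_{\ls}(P,Q)$, (ii) being hard against \emph{every} $D$-output channel (not only threshold ones), and (iii) staying inside the parameter regime $\delta \le \pi/16$ required by \Cref{samplecomplxsimplecor}. Choosing the geometric profile so all three constraints are satisfied uniformly in $\pi$ and $\delta$ is where the bulk of the work will lie.
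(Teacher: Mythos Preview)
Your upper bound is essentially the paper's argument: apply \Cref{samplecomplxsimplecor} to $(TP,TQ)$, maximise over $T$ via the reverse data-processing inequality (\Cref{revdatainf}), and compare with the unconstrained $n^*$.

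For the lower bound your reduction to a tightness instance of reverse data-processing is the right move, but you overestimate the work. The paper dissolves your ``main obstacle'' with two shortcuts you did not mention. First, concern (ii) about arbitrary $D$-output channels vanishes because $\cT_D$ is a convex polytope whose extreme points are the threshold channels (\cite{PenEtal24a}), and $T\mapsto H_{\ls}(TP,TQ)$ is convex; hence it suffices to bound the supremum over threshold channels only. Second, and more importantly, no genuinely new ``$\ls$-matched'' adversarial construction is needed: the paper reuses the hard distribution $\tilde q$ from the reverse-Markov tightness lemma of \cite{PenEtal23} almost verbatim, setting $q_i=\tilde q_i$ and $p_i=(1-\tilde\delta_i^2)^{1/\ls}q_i$, a single $\ls$-dependent power on the likelihood ratios chosen so that $H_{\ls}(P,Q)=\E[\tilde X^2]$ exactly. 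The quantised divergence is then controlled by invoking \Cref{Hlambdaupper}, namely $H_{\ls}(TP,TQ)\le 2\ls\,H_{1/2}(TP,TQ)$, which reduces the analysis back to the $\lambda=\tfrac12$ case already handled in \cite{PenEtal23}. Your proposed ``reweighting the two halves in proportion to $\ls,\lsb$'' and the uniform-in-$\pi$ geometric calibration are therefore unnecessary; the skewed-prior case costs essentially no extra effort over the uniform one once you pass through $H_{1/2}$.
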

When $D$ is small, this result implies that the sample complexity with constraints is at most $n^*\log n^*$; i.e., communication constraints only have a mild impact by increasing the sample complexity by a logarithmic factor compared to the unconstrained sample complexity. This result generalises a similar result proved in \cite{PenEtal24b} for the restricted setting of constant $\pi$.

Finally, our bounds also imply computational benefits under $\epsilon$-local differential privacy constraints. Specifically, restricting $\epsilon$-LDP channels to $D$ outputs only has a mild impact on the sample complexity with $\epsilon$-LDP constraints, and furthermore, the optimal channel with $D$ outputs can be found efficiently. We omit the proof of this result as it is largely along the lines of Theorem 2.6 from \cite{PenEtal24b}, and note that the main improvement is replacing $\frac{\log (n^*_{\mathrm{priv}}/\pi)}{D}$.

\begin{thr}[Computational benefits for private hypothesis testing (informal)]
Let $\cT_{\epsilon\text{-LDP}}$ be the set of all $\epsilon$-LDP channels and $\cT_{\epsilon\text{-LDP}, D}$ be all $\epsilon$-LDP channels with output alphabet $D \ge 2$. Let $n^*_{\mathrm{priv}} \triangleq n^*(P, Q, \pi, \delta, \epsilon)$ be the sample complexity under  $\cT_{\epsilon\text{-LDP}}$ constraints, and $n^*_{\mathrm{priv, D}}\triangleq n^*(P, Q, \pi, \delta, \epsilon, D)$ under $\cT_{\epsilon\text{-LDP}, D}$ constraints. Then there exists an algorithm that runs in time polynomial in $|\cX|^{D^2}$ and $2^{D^3\log D}$ and finds a channel $T$ from $\cT_{\epsilon\text{-LDP}, D}$ such that the sample complexity with $T$ is at most $n^*_{\mathrm{priv}}\cdot\max\left(1,\frac{\log n^*_{\mathrm{priv}}}{D} \right)$.\footnote{Observe that in the important special case when $D$ is a small constant, the algorithm is polynomial in the support size of the distributions.}
\end{thr}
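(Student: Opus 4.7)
The plan is to combine the reverse data-processing inequality (\Cref{revdatainf}) with the sharp sample-complexity formula (\Cref{samplecomplxsimplecor}) on the statistical side, and to follow the enumeration strategy of Theorem~2.6 in~\cite{PenEtal24b} on the algorithmic side. The stated improvement over the prior bound is a direct consequence of applying \Cref{revdatainf} to $H_{\lambda_*}$ rather than to $H_{1/2}$: the denominator $\log(1/H_{\lambda_*})$ is equivalent to $\log n^*_{\mathrm{priv}}$ (with no $\log(1/\pi)$ term) by \Cref{samplecomplxsimplecor}.

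\textbf{Statistical step.} Let $T^\circ\in\cT_{\epsilon\text{-LDP}}$ be a channel attaining $n^*_{\mathrm{priv}}$, and write $H^\circ:=H_{\lambda_*}(T^\circ P,T^\circ Q)$. By \Cref{samplecomplxsimplecor} applied to the binary testing problem between $T^\circ P$ and $T^\circ Q$, we have $n^*_{\mathrm{priv}}\asymp \log(\pi/\delta)/H^\circ$, and hence $\log(1/H^\circ)\lesssim \log n^*_{\mathrm{priv}}$. Next, apply \Cref{revdatainf} with $D_f=H_{\lambda_*}$ to the pair $(T^\circ P,T^\circ Q)$, obtaining a threshold quantiser $T^q\in\cT_D$ with
\begin{equation*}
    H_{\lambda_*}(T^q T^\circ P,\, T^q T^\circ Q)\;\gtrsim\; \min\!\left\{1,\tfrac{D}{\log(1/H^\circ)}\right\}H^\circ.
\end{equation*}
Because $T^q$ is a deterministic post-processing and $\epsilon$-LDP is closed under post-processing, $T^q\circ T^\circ\in\cT_{\epsilon\text{-LDP},D}$. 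Applying \Cref{samplecomplxsimplecor} once more to the output distributions of $T^q\circ T^\circ$ yields a sample complexity of at most $n^*_{\mathrm{priv}}\cdot\max\{1,\log(n^*_{\mathrm{priv}})/D\}$, as required.

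\textbf{Algorithmic step.} The above argument is non-constructive since $T^\circ$ is unknown. Following~\cite{PenEtal24b}, I would enumerate a finite family $\mathcal{F}\subseteq\cT_{\epsilon\text{-LDP},D}$ guaranteed to contain a channel whose Hellinger-$\lambda_*$ affinity is within a constant factor of that of $T^q\circ T^\circ$, and then return the $T\in\mathcal{F}$ minimising $\betls(TP,TQ)$ (a one-dimensional quantity computable in $\poly(|\cX|,D)$ time). Two structural facts drive the enumeration: (i) extremal channels in $\cT_{\epsilon\text{-LDP},D}$ are parameterised, for each of the $D$ outputs, by a subset of $\cX$ declaring which rows take the ``high'' value $e^\epsilon$ relative to the ``low'' value $1$; and (ii) because $T^q$ thresholds the one-dimensional likelihood ratio, only $D$ distinct column profiles of $T^q\circ T^\circ$ are realised, so one may first quotient $\cX$ into at most $D$ equivalence classes (one of $|\cX|^{D^2}$ choices) and then enumerate the $2^{D^3\log D}$ output patterns on the quotient.

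\textbf{Main obstacle.} The statistical inequality is essentially immediate from \Cref{revdatainf,samplecomplxsimplecor}; the key technical work lies in the algorithmic discretisation, i.e.\ in verifying that the finite family $\mathcal{F}$ contains a sufficiently good channel and in controlling the rounding loss incurred when the optimal extremal channel is projected onto the quotient structure. As this piece is essentially identical to the argument in~\cite[Theorem~2.6]{PenEtal24b} with only the $\pi$-dependence stripped out (thanks to the sharper denominator in \Cref{revdatainf}), I would largely defer to that reference and only sketch the modifications.
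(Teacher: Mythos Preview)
Your proposal is correct and matches the paper's approach exactly: the paper omits the proof, stating only that it is ``largely along the lines of Theorem~2.6 from~\cite{PenEtal24b}'' with the sole improvement being the replacement of $\log(n^*_{\mathrm{priv}}/\pi)$ by $\log n^*_{\mathrm{priv}}$, which you correctly attribute to applying \Cref{revdatainf} with $H_{\lambda_*}$ in place of $H_{1/2}$. Your statistical step (post-process the optimal $\epsilon$-LDP channel via the threshold quantiser from \Cref{revdatainf}, invoke closure under post-processing, and read off the sample complexity via \Cref{samplecomplxsimplecor}) and your deferral of the algorithmic enumeration to~\cite{PenEtal24b} are precisely what the paper has in mind.
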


\section{Related Work}
The problem of hypothesis testing under communication constraints has been studied since the 1980s \cite{tsitsiklis1988decentralized}, largely in the asymptotic setting when $n$ goes to infinity. There has been significant progress on the non-asymptotic and minimax setting for closely related problems of distribution estimation or identity testing~\cite{acharya2020a,acharya2020b,AchCLST22-interactive,shamir2014fundamental} and parametric estimation~\cite{ozgur2018geometric}. The works most closely related to this paper are \cite{PenEtal23,PenEtal24a} which study the sample complexity of simple binary hypothesis testing under communication and privacy constraints, respectively. As noted earlier, this paper is motivated by the open problems from~\cite{PenEtal24b}, which primarily studied the unconstrained setting. 
Related to \cref{thm: e_alpha_lb}, we note that the information theory literature is rich with various lower bounds on the Bayes error due to the natural connections between $M$-ary hypothesis testing and channel coding. See, for example~\cite{VazEtal16, Ren66, Sha11, SasVer17, FedMer94, BenRav78, Ito72, CovHar67, Dev74, Kai67, Tou72, Vaj68, SanVar06}. Unfortunately, \emph{none} of these bounds are suitable for studying the sample complexity because they do not \emph{tensorise}; i.e., the error probability bound for $n$-fold product distributions cannot be evaluated easily. The information theory literature also contains non-asymptotic bounds on the Type-I and Type-II error probabilities (see \cite{LunKon24} for an exposition), however, it is unclear if these bounds can lead to tight sample complexity bounds for binary hypothesis testing. We also note that reverse data-processing bound from \cref{revdata} is closely related to the problem of quantisation while preserving information, with some related works being~\cite{BhaEtal21,pedarsani2011,tal2012,kurkoski2014}.

\section{One-Shot Tight Lower Bound on the Error of Simple Binary Hypothesis Testing}\label{one-shot}

Motivated by the study of the sample complexity of simple binary hypothesis testing with sequential interaction and information constraints, in this section, we prove a one-shot lower bound on the error probability of simple binary hypothesis testing problem. This one-shot lower bound will give us a powerful tool to determine the sample complexity of binary hypothesis testing under any set of information constraints. 

\setcounter{thr}{0}
\begin{thr}[One-shot lower bound on the Bayes error]
    \label{thm: e_alpha_lb}
    Let $P,Q$ be a pair of distributions and $\pi\in(0,\frac{1}{2}]$ such that $e_{\pi}(P,Q)\in(0,\pi)$. Then for any $\delta\geq e_{\pi}(P,Q)$ there exists $\ls=\ls(\pi,\delta)\in [0.5,1)$ such that
    \begin{equation}\label{ealphalower}
         \delta
         \geq 
         \frac{1}{4}\pi^{\ls} \alb^{\lsb} \left(\betls (P,Q)\right)^2,
    \end{equation}
    where 
        $\ls(\pi,\delta)
        =
        \frac{\log(\frac{\alb}{\delta})}{\log(\frac{\alb}{\delta})+\log(\frac{\pi}{\delta})}.$
    In particular, when $\delta=e_{\pi}(P,Q)$, there exists $\ls$ (that depends on $e_{\pi}(P,Q)$) such that 
        $e_{\pi}(P,Q)
        \geq
        \frac{1}{4}
        \pi^{\ls}\alb^{\lsb}\left(\betls(P,Q)\right)^2.$
\end{thr}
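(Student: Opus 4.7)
The plan is to establish the stronger estimate $\pi^{\ls}\alb^{\lsb}\betls(P,Q) \leq 2\sqrt{\pi^{\ls}\alb^{\lsb}\,\delta}$; squaring it directly yields \eqref{ealphalower} with constant $1/4$.

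The first step is to partition $\cX = S_1 \cup S_2$ via $S_1 = \{x : \pi P(x) \geq \alb Q(x)\}$ and $S_2 = \cX \setminus S_1$, and to write $m(x) = \min(\pi P(x), \alb Q(x))$ and $M(x) = \max(\pi P(x), \alb Q(x))$, so that $\sum_x m(x) = e_{\pi}(P,Q)$. On $S_1$ the summand $(\pi P(x))^\lambda (\alb Q(x))^{\lamb}$ equals $M(x)^\lambda m(x)^{\lamb}$, and on $S_2$ it equals $m(x)^\lambda M(x)^{\lamb}$.

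The key step is H\"older's inequality on each half with conjugate exponents $1/\lambda$ and $1/\lamb$ (valid since $\lambda + \lamb = 1$):
\[
\sum_{S_1} M^\lambda m^{\lamb} \leq \Bigl(\textstyle\sum_{S_1} M\Bigr)^\lambda \Bigl(\textstyle\sum_{S_1} m\Bigr)^{\lamb} \leq \pi^\lambda\, e_{\pi}(P,Q)^{\lamb},
\]
using $\sum_{S_1} M = \sum_{S_1} \pi P \leq \pi$ and $\sum_{S_1} m \leq e_{\pi}(P,Q)$. A symmetric estimate gives $\sum_{S_2} m^\lambda M^{\lamb} \leq \alb^{\lamb}\, e_{\pi}(P,Q)^\lambda$. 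Combining both halves and using $e_{\pi}(P,Q) \leq \delta$ produces
\[
\pi^\lambda \alb^{\lamb}\beta_\lambda(P,Q) \leq \pi^\lambda \delta^{\lamb} + \alb^{\lamb} \delta^\lambda.
\]

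The final step is to verify that $\ls = \log(\alb/\delta)/[\log(\alb/\delta) + \log(\pi/\delta)]$ is precisely the value that equates the two right-hand terms: a short calculation gives $\pi^{\ls}\delta^{\lsb} = \alb^{\lsb}\delta^{\ls} = \sqrt{\pi^{\ls}\alb^{\lsb}\,\delta}$, so substituting $\lambda = \ls$ yields $\pi^{\ls}\alb^{\lsb}\betls(P,Q) \leq 2\sqrt{\pi^{\ls}\alb^{\lsb}\,\delta}$, and $\ls \in [1/2, 1)$ follows from $\pi \leq \alb$ and $\delta < \pi$. The only creative ingredient is applying H\"older with the conjugate pair $(1/\lambda, 1/\lamb)$; this is the reverse-direction analogue of the inequality underlying the classical Chernoff-type bound $e_{\pi}(P,Q) \leq \pi^\lambda \alb^{\lamb}\beta_\lambda(P,Q)$, after which the rest is standard Chernoff-style balancing over $\lambda$.
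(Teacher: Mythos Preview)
Your proof is correct and is genuinely simpler than the paper's. The paper proceeds in three stages: (i) a reduction to Bernoulli distributions via the data-processing inequality (their Lemma~1), (ii) a technical optimisation over the constraint set $\{(p,q)\in[0,1]^2: e_\pi(p,q)=\delta\}$, which is the union of two line segments, to obtain $\sup\betls(p,q)\le 2(\delta/\pi)^{\ls}$ (their Lemma~2), and (iii) a separate perturbation argument (adding a shared atom) to handle $\delta>e_\pi(P,Q)$. Your single H\"older step on each half of the partition $S_1\cup S_2$ bypasses all three: it works directly for arbitrary $P,Q$ (no Bernoulli reduction), it yields exactly the same bound $\betls(P,Q)\le 2(\delta/\pi)^{\ls}$ once one unpacks $\sqrt{\delta/(\pi^{\ls}\alb^{\lsb})}=(\delta/\pi)^{\ls}$, and it handles $\delta\ge e_\pi(P,Q)$ automatically because $e_\pi(P,Q)$ enters only through the monotone bounds $\sum_{S_i}m\le e_\pi(P,Q)\le\delta$. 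What the paper's route buys is perhaps some extra geometric intuition about where the extremal $(p,q)$ sit; what your route buys is a one-paragraph proof with no case analysis.
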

\setcounter{thr}{6}

\begin{proof}[Proof of \cref{thm: e_alpha_lb}]
    First, suppose that $e_\pi(P,Q)=\delta$. We will prove the case where $e_\pi(P,Q)<\delta$ later. Our first lemma reduces the problem to Bernoulli $P$ and $Q$. For brevity, we use $e_\pi(p,q)$ instead of $e_\pi(\Ber(p), \Ber(q))$ in what follows, and similarly for $\beta_\lambda(\cdot, \cdot)$.
    \begin{restatable}[Reduction to Bernoulli]{lem}{binaryred}\label{lem: binary red}
        Assume for every pair of $(p,q)\in[0,1]^2$ we have the following
            $e_{\pi}(p,q)
            \geq
            \frac{1}{4}\pi^{\ls}\alb^{\lsb}\left(\betls\left(p,q\right)\right)^2,$
        where $\ls$ is chosen as in \cref{thm: e_alpha_lb}.
        Then for any $P$ and $Q$ on a discrete set $\cX$, we have
            $e_{\pi}(P,Q)
            \geq
            \frac{1}{4}
            \pi^{\ls}\alb^{\lsb}\left(\betls(P,Q)\right)^2.$
    \end{restatable}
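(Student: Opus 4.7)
The plan is a reduction via the optimal Bayes (MAP) quantiser. Let $T\colon\cX\to\{0,1\}$ be defined by $T(x)=1$ if $\pi P(x)>\alb Q(x)$ and $T(x)=0$ otherwise (breaking ties arbitrarily), and set $p=\prob_{X\sim P}[T(X)=1]$ and $q=\prob_{X\sim Q}[T(X)=1]$, so that the push-forwards satisfy $TP=\Ber(p)$ and $TQ=\Ber(q)$. The reduction then proceeds in two steps: first show that $T$ preserves the Bayes error exactly, and second show that $T$ can only increase the Hellinger-$\lambda$ affinity.

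For the first step, I would invoke the closed form from \cref{fact: bayes}: on $\{T=1\}$ the integrand $\min\{\pi P(x),\alb Q(x)\}$ equals $\alb Q(x)$, while on $\{T=0\}$ it equals $\pi P(x)$. Summing gives $e_{\pi}(P,Q)=\alb q+\pi(1-p)$. Summing the pointwise defining inequalities of $T$ over the corresponding cells gives $\alb q\le \pi p$ and $\pi(1-p)\le\alb(1-q)$, so $e_{\pi}(\Ber(p),\Ber(q))=\min\{\pi p,\alb q\}+\min\{\pi(1-p),\alb(1-q)\}$ collapses to the same quantity $\alb q+\pi(1-p)$. Hence $e_{\pi}(p,q)=e_{\pi}(P,Q)$, and since the map $\delta\mapsto\ls(\pi,\delta)$ depends only on $\pi$ and the Bayes error, the \emph{same} $\ls$ is attached to both pairs $(P,Q)$ and $(p,q)$.

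For the second step, I would apply the data-processing inequality to the $f$-divergence $H_{\ls}$, giving $H_{\ls}(\Ber(p),\Ber(q))=H_{\ls}(TP,TQ)\le H_{\ls}(P,Q)$, which rewrites as $\betls(p,q)\ge\betls(P,Q)$. Since $\betls\in[0,1]$, squaring preserves the inequality. Chaining the Bernoulli hypothesis applied to $(p,q)$ with the two comparisons above yields
\begin{equation*}
e_{\pi}(P,Q)=e_{\pi}(p,q)\ge\frac{1}{4}\pi^{\ls}\alb^{\lsb}\bigl(\betls(p,q)\bigr)^{2}\ge\frac{1}{4}\pi^{\ls}\alb^{\lsb}\bigl(\betls(P,Q)\bigr)^{2},
\end{equation*}
which is the claim.

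The main thing to verify carefully is the identity $e_{\pi}(P,Q)=e_{\pi}(p,q)$; this is the classical sufficiency of the binary MAP rule for Bayes risk and is really just bookkeeping with tie-breaking. All other steps, namely the data-processing inequality for $H_{\ls}$ and the monotonicity of $t\mapsto t^{2}$ on $[0,1]$, are routine. I therefore do not anticipate any serious obstacle in executing this plan.
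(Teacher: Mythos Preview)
Your proposal is correct and is essentially the same argument as the paper's: both quantise via the MAP set $\{x:\pi P(x)\le\alb Q(x)\}$ (your $T$ is just its indicator with the labels swapped), note that this preserves the Bayes error exactly, and then apply data-processing for $H_{\ls}$ to compare affinities. Your write-up actually spells out the verification $e_\pi(P,Q)=e_\pi(p,q)$ and the fact that the same $\ls$ applies to both pairs more carefully than the paper does, so nothing is missing.
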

    \begin{proof}[Proof of~\cref{lem: binary red}]
        Consider the set $A\triangleq\{x\in\cX|\pi P(x)\leq \alb Q(x)\}$ and define $p_A\triangleq P(A)$ and $q_A\triangleq Q(A)$. One can check that $e_\pi(P,Q)=e_\pi(\Ber(p_A),\Ber(q_A))$. We can write
    \begin{align}
        e_\pi(P,Q)
        =
        e_\pi(p_A, q_A)
        &\stackrel{(i)}{\geq}
        \frac{1}{4}\pi^{\ls}
    \alb^{\lsb}\left(\betls(p_A,q_A)\right)^2   \stackrel{(ii)}{\geq}
        \frac{1}{4}\pi^{\ls
        }\alb^{\lsb}\left(\betls(P,Q)\right)^2
    \end{align}
    where $(i)$ follows from the assumption we made in \cref{lem: binary red} and $(ii)$ follows from the data-processing inequality, having in mind that $H_{\ls}(P,Q)=1-\betls(P,Q)$ is a f-divergence.
    \end{proof}

The proof of Bernoulli case relies on the following technical lemma. 
 
    \begin{restatable}[Affinity upper bound for Bernoulli distributions]{lem}{techupbound}\label{techupbound}
        For $\ls$ as in \cref{thm: e_alpha_lb} and $\pi\in(0,1/2]$, we have         
        \begin{equation}             
        \sup_{\substack{(p,q)\in [0,1]^2:\\ e_\pi(p,q)=\delta}}\betls(p,q)
        \leq             
        2\left(\frac{\delta}{\pi}\right)^{\ls}.
        \end{equation}
        Moreover, we have $\left(\frac{\delta}{\pi}\right)^{\ls}=(\frac{\delta}{\alb})^{\lsb}$.    
    \end{restatable}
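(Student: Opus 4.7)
The plan is to dispense with the ``moreover'' identity by direct algebra, then reduce the main inequality to a short case analysis based on the structure of the Bayes-optimal estimator for Bernoulli pairs.

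First I would verify the identity $(\delta/\pi)^{\ls} = (\delta/\alb)^{\lsb}$. Substituting the formula for $\ls$ and taking logarithms gives
\begin{equation*}
    \ls \log(\pi/\delta) = \frac{\log(\alb/\delta)\log(\pi/\delta)}{\log(\alb/\delta)+\log(\pi/\delta)} = \lsb \log(\alb/\delta),
\end{equation*}
which is exactly the claim after negating both sides. This identity is the reason the specific formula for $\ls$ is chosen, and it is the key mechanism that will glue the two halves of the affinity together at the end.

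For the main bound, I would write the Bayes error as $e_\pi(p, q) = \min(\pi p, \alb q) + \min(\pi(1-p), \alb(1-q))$ and the affinity as $\betls(p,q) = p^{\ls} q^{\lsb} + (1-p)^{\ls}(1-q)^{\lsb}$, then split into four cases according to which branch is attained in each minimum. The two ``same-branch'' cases (both mins on the $\pi\cdot$ side, or both on the $\alb\cdot$ side) force $e_\pi(p,q) \in \{\pi,\alb\}$; since $\delta < \pi \le \tfrac12 \le \alb$, these cases contain no feasible $(p,q)$ with $e_\pi(p,q) = \delta$. In each of the remaining two symmetric cases --- say $\pi p > \alb q$ and $\pi(1-p) \le \alb(1-q)$ --- the constraint $e_\pi(p, q) = \delta$ reads $\alb q + \pi(1-p) = \delta$, which immediately gives the elementary bounds $q \le \delta/\alb$ and $1-p \le \delta/\pi$. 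Combined with the trivial $p \le 1$ and $1-q \le 1$, these yield
\begin{equation*}
    p^{\ls}q^{\lsb} \le (\delta/\alb)^{\lsb} \qquad \text{and} \qquad (1-p)^{\ls}(1-q)^{\lsb} \le (\delta/\pi)^{\ls}.
\end{equation*}
Summing and applying the moreover identity yields $\betls(p,q) \le (\delta/\alb)^{\lsb} + (\delta/\pi)^{\ls} = 2(\delta/\pi)^{\ls}$, and the remaining symmetric case follows by swapping the roles of $0$ and $1$.

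The argument is essentially calculation-free and I would not expect any serious obstacle; the only subtlety is recognising that the choice of $\ls$ in Theorem~\ref{thm: e_alpha_lb} is engineered precisely so that the two trivial upper bounds arising from the two branches of $e_\pi$ --- namely $(\delta/\alb)^{\lsb}$ and $(\delta/\pi)^{\ls}$ --- are numerically equal, which is what produces the clean factor of $2$.
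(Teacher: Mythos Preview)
Your proposal is correct and shares the same skeleton as the paper's proof: both observe that the feasible set $\{e_\pi(p,q)=\delta\}$ is a union of two line segments (your two ``mixed'' cases are exactly the paper's $L_1$ and $L_2$), reduce to one segment by the $(p,q)\leftrightarrow(\bar p,\bar q)$ symmetry, and exploit the defining identity $(\delta/\pi)^{\ls}=(\delta/\alb)^{\lsb}$ to merge the two terms of the affinity.

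The execution differs. On $L_1$ (where $p\le\delta/\pi$ and $1-q\le\delta/\alb$) you simply bound each term factor-by-factor, using $q\le 1$ and $1-p\le 1$ on the ``other'' factors to get $p^{\ls}q^{\lsb}+(1-p)^{\ls}(1-q)^{\lsb}\le(\delta/\pi)^{\ls}+(\delta/\alb)^{\lsb}$. The paper instead writes $p^{\ls}q^{\lsb}=q(p/q)^{\ls}$ and $(1-p)^{\ls}(1-q)^{\lsb}=\bar p(\bar q/\bar p)^{\lsb}$, maximises the ratios $p/q$ and $\bar q/\bar p$ separately over $L_1$ (obtaining $\delta/\pi$ and $\delta/\alb$), and then still has to maximise the leftover $q+\bar p$, which gives $2-\delta/\alb$. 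So the paper's route is more elaborate but yields the slightly sharper constant $2-\delta/\alb$ in place of your $2$; since the lemma only asserts $2$, your shortcut is sufficient and arguably cleaner.
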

  
\begin{proof}[Proof Sketch of \cref{techupbound}] The proof of this lemma is in \Cref{app: affinity-upperbound}, but we give a brief proof sketch. From \cref{fact:fdiv-cvx} we know that the function $\betl(p,q)$ is jointly concave in its inputs. One can also see that the set of $(p,q)\in[0,1]^2$ that have $e_{\pi}(p,q)=\delta$ is the union of two line segments. The choice of $\ls$ is such that $\betls(p,q)$ is almost constant on the line segments, making it possible to bound the supremum by the value at the endpoints, which are easy to evaluate.
\end{proof}    
Using \cref{techupbound}, we can complete the proof of \cref{thm: e_alpha_lb} as follows:
    \begin{align}
     \frac{1}{4}\pi^{\ls} \Bar{\pi}^{\lsb} \left(\beta_{\ls}(p,q)\right)^2 
     \leq &
     \frac{1}{4}\pi^{\ls} \Bar{\pi}^{\lsb} \left(\sup_{\substack{(p,q)\in [0,1]^2:\\ e_\pi(p,q)=\delta}}\betls(p,q)\right)^2 
     \nonumber\\ 
     &\stackrel{(i)}\leq 
     \pi^{\ls} \Bar{\pi}^{\lsb} \left(\frac{\delta}{\pi}\right)^{2\ls}
     =
     \delta,
     \end{align}
where $(i)$ follows by \cref{techupbound}. This completes the proof of \cref{thm: e_alpha_lb} when $\delta=e_\pi(P,Q)$. When $\delta>e_\pi(P,Q)$, for some $a \notin \cX$ set $\tilde{P}(x) \triangleq \gamma P(x) \mathbbm 1_{x \neq a} + \bar \gamma \mathbbm 1_{x = a} $ and $\Tilde{Q}(x) = \gamma Q(x) \mathbbm 1_{x \neq a} + \bar \gamma \mathbbm 1_{x = a}$.
    Where $a \notin \cX$ is a new symbol. We set $\gamma$ such that $e_\pi(\Tilde{P},\Tilde{Q})=\delta$. Such a $\gamma$ always exists, since $e_\pi(\Tilde{P},\Tilde{Q}) = \gamma e_{\pi}(P,Q)+\Bar{\gamma}\pi$.
 We also have the following lower bound on $\betls(\Tilde{P},\Tilde{Q})$:
        \begin{align}\label{betlieq}
    \betls(\Tilde{P},\Tilde{Q})
    =&
    \sum_{x\in \cX\cup\{a\}}\Tilde{P}(x)^{\ls} \Tilde{Q}(x)^{\lsb}
    \gamma\betls(P,Q)+\Bar{\gamma}
    \overset{(i)}{\geq}
    \left(\betls(P,Q)\right)^\gamma 
    \overset{(ii)}{\geq}
    \betls(P,Q).
    \end{align}
    Where $(i)$ follows from the AM-GM inequality and $(ii)$ follows from the fact that $\betl(P,Q)\leq 1$ and $\gamma \leq 1$. Now we can use the inequality \cref{betlieq} above to prove the theorem for the case $\delta>e_\pi(P,Q)$ as follows:
    \begin{align*}
    \delta 
    =
    e_{\pi}(\Tilde{P},\Tilde{Q})
    \geq
    \frac{1}{4}\pi^{\ls} \Bar{\pi}^{\lsb} \left(\betls(\Tilde{P},\Tilde{Q})\right)^2
    \geq
    \frac{1}{4}\pi^{\lambda} \Bar{\pi}^{\lsb} \left(\betls(P,Q)\right)^2.
    \end{align*}
\end{proof}
\subsection{Sample Complexity of Unconstrained Simple Binary Hypothesis Testing}\label{sec:samplecompsimple}
In this section we derive a shorter and simpler proof of the sample complexity of simple binary hypothesis testing from \cite{PenEtal24b} using \cref{thm: e_alpha_lb}.  
\samplecomplxsimplecor*

\begin{proof}[Proof of \cref{samplecomplxsimplecor}]
    The upper bound in \Cref{samplecomplxsimple} follows simply by setting $n=\left\lceil 2\ls\frac{\log(\frac{\pi}{\delta})}{\log\left(\frac{1}{\betls(P,Q)}\right)}\right\rceil$ and using the upper bound on error probability in \cref{fact: bayes} as below
    \begin{align*}
        e_{\pi}(P^{\otimes n},Q^{\otimes n})
        \leq&
        \pi^{\ls}\alb^{\lsb}\beta_{\ls}(P^{\otimes n},Q^{\otimes n})
        =
        \pi^{\ls}\alb^{\lsb}\left(\beta_{\ls}(P,Q)\right)^n
        \le
        \pi^{\ls}\alb^{\lsb}(\frac{\delta}{\pi})^{2\ls}
        =
        \delta,
    \end{align*}
    where we used the fact that we have $\frac{\pi^{\ls}\alb^{\lsb}}{\delta}=\left(\frac{\pi}{\delta}\right)^{\ls}\left(\frac{\alb}{\delta}\right)^{\lsb}=\left(\frac{\pi}{\delta}\right)^{2\ls}$
    for the specific choice of $\ls$ in \cref{samplecomplxsimplecor}. 
    To prove the lower bound, suppose that for $\delta\in(0,\frac{1}{16}\pi]$ and some $n\in\mathbb{N}$ we have $e_\pi(P^{\otimes n},Q^{\otimes n})\leq \delta$. Then we use the lower bound we proved in \cref{thm: e_alpha_lb} to write
    \begin{align*}
        &\delta
        \geq 
        \frac{1}{4}\pi^{\ls} \alb^{\lsb}\left(\betls(P^{\otimes n},Q^{\otimes n})\right)^2 
        =
        \frac{1}{4}\pi^{\ls} \alb^{\lsb}\left(\betls(P,Q)\right)^{2n}
        \overset{(i)}{\geq}
        \left(\frac{\delta}{\pi}\right)^{\ls}\pi^{\ls} \alb^{\lsb}\left(\betls(P,Q)\right)^{2n}, 
    \end{align*}
    In, $(i)$ we used $\left(\frac{\pi}{\delta}\right)^{\ls}
    \geq
    \left(\frac{\pi}{\delta}\right)^{0.5}
    \geq
    4$. Rearranging and simplifying yields $n
        \geq
        \frac{\ls}{2} \frac{\log(\frac{\pi}{\delta})}{\log(\frac{1}{\betls(P,Q)})}$.
\end{proof}
\subsection{Hypothesis Testing Under Information Constraints with Sequential Interaction}\label{sec:infconst}
In this part we will study the distributed hypothesis testing problem under information constraints. 


\idnonidchannels*

\begin{proof}[Proof of \cref{id-nonid-channels}]
For channels $T^1,T^2,\dots,T^n$, define $P_{Y_1^n}$ (respectively, $Q_{Y_1^n}$) to be the output distributions under input distribution $P^{\otimes n}$ (respectively, $Q^{\otimes n}$). We have   \begin{align*}
        &P_{Y_1^n}(y_1,y_2,\dots,y_n)=\E_{X_1^n\sim P^{\otimes n}}\left[T^1(Y_1=y_1|X_1)T^2(Y_2=y_2|X_2,y_1)\dots T^n(Y_n=y_n|X_n,y_1^{n-1})\right],\\
        &Q_{Y_1^n}(y_1,y_2,\dots,y_n)=\E_{X_1^n\sim Q^{\otimes n}}\left[T^1(Y_1=y_1|X_1)T^2(Y_2=y_2|X_2,y_1)\dots T^n(Y_n=y_n|X_n,y_1^{n-1})\right],
    \end{align*}
For $\ls=\frac{\log(\frac{\alb}{\delta})}{\log(\frac{\alb}{\delta})+\log(\frac{\pi}{\delta})}$ define $\beta_*=\inf_{T\in\cT}\betls(TP,TQ)$. We use the following key lemma, proved in \Cref{lem2proof}:

    \begin{restatable}{lem}{idchannels}\label{idchannels}
        For any $(T^1,T^2,\dots,T^n)$ in the sequentially interactive setting, we have 
            $\betls(P_{Y_1^n},Q_{Y_1^n})
            \geq
            \beta_*^n.$
    \end{restatable}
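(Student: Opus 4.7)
The plan is to prove this by induction on $n$ via a chain-rule identity for the Hellinger-$\lambda$ affinity. The base case $n=1$ is immediate: since $T^1 \in \cT$, we have $\betls(P_{Y_1},Q_{Y_1}) = \betls(T^1 P, T^1 Q) \geq \beta_*$.

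First, I would record a simple chain rule: for any joint discrete distributions $(P_{AB}, Q_{AB})$,
\begin{equation*}
\betls(P_{AB}, Q_{AB}) \;=\; \sum_a P_A(a)^{\ls} Q_A(a)^{\lsb}\, \betls\bigl(P_{B|A=a}, Q_{B|A=a}\bigr),
\end{equation*}
which follows by factoring $P_{AB}(a,b) = P_A(a) P_{B|A}(b|a)$ (likewise for $Q$) inside the definition of $\betls$ and separating the sums in $a$ and $b$. Applying this identity with $A = Y_1^{n-1}$ and $B = Y_n$ yields
\begin{equation*}
\betls(P_{Y_1^n}, Q_{Y_1^n})
\;=\; \sum_{y_1^{n-1}} P_{Y_1^{n-1}}(y_1^{n-1})^{\ls} Q_{Y_1^{n-1}}(y_1^{n-1})^{\lsb}\, \betls\bigl(P_{Y_n | y_1^{n-1}}, Q_{Y_n | y_1^{n-1}}\bigr).
\end{equation*}

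Next, I would argue that every conditional factor is at least $\beta_*$. Using the independence of $X_1,\dots,X_n$ in the given formula for $P_{Y_1^n}$, one sees that $P_{Y_1^n}$ factorises into conditional kernels, so $P_{Y_n \mid Y_1^{n-1}=y_1^{n-1}}$ equals the output of the channel $x \mapsto T^n(\cdot \mid x, y_1^{n-1})$ applied to $P$, and analogously for $Q$. By the definition of the sequentially interactive setting in \cref{dfn:info_hyp}, this channel belongs to $\cT$. Hence
\begin{equation*}
\betls\bigl(P_{Y_n | y_1^{n-1}}, Q_{Y_n | y_1^{n-1}}\bigr) \;\geq\; \beta_*
\end{equation*}
for every $y_1^{n-1}$. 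Substituting this bound into the chain-rule identity and recognising the remaining sum as $\betls(P_{Y_1^{n-1}}, Q_{Y_1^{n-1}})$ gives
\begin{equation*}
\betls(P_{Y_1^n}, Q_{Y_1^n}) \;\geq\; \beta_* \cdot \betls(P_{Y_1^{n-1}}, Q_{Y_1^{n-1}}),
\end{equation*}
and the induction hypothesis closes the argument.

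I do not expect any serious obstacle here: the only subtlety is the identification of $P_{Y_n \mid y_1^{n-1}}$ with the output of a channel in $\cT$, which rests on the fact that $X_n$ is independent of $(X_1,\dots,X_{n-1})$ and hence of $Y_1^{n-1}$. This is a structural feature of the i.i.d.\ sampling model and of the sequential protocol, and it is precisely what distinguishes sequentially interactive protocols from fully adaptive ones where the $X_i$'s might depend on earlier outputs. Modulo that observation, the proof is a direct chain-rule induction.
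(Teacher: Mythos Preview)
Your proposal is correct and follows essentially the same induction-via-chain-rule argument as the paper: both expand $\betls(P_{Y_1^n},Q_{Y_1^n})$ by conditioning on $Y_1^{n-1}$, lower-bound each conditional factor $\betls(P_{Y_n|y_1^{n-1}},Q_{Y_n|y_1^{n-1}})$ by $\beta_*$, and invoke the induction hypothesis. Your version is in fact slightly cleaner, since you bound each conditional term directly by $\beta_*$, whereas the paper introduces an unnecessary $\epsilon$-argument (picking a near-infimum $\bar{y}_1^{n-1}$) before doing the same thing.
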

     Let $\nsid =\nsid(P,Q,\pi,\delta,\cT)$. For some $t>0$, we choose $\epsilon\in(0,1)$ such that $(1+\epsilon)^{\nsid-1}\leq (\frac{\pi}{\delta})^{2{\ls} t}$. Note that it is possible because $\frac{\pi}{\delta}>1$. Now, let $T^*\in\cT$ be a channel such that $\betls(T^*P,T^*Q)\leq (1+\epsilon)\beta_*$. Based on the definition of $\nsid$ we have $e_\pi((T^*P)^{\otimes (\nsid-1)},(T^*Q)^{\otimes (\nsid-1)})>\delta.$ Using the upper bound on the error probability in \cref{fact: bayes}, we can write
    \begin{align*}
        &\delta\leq e_\pi\left((T^*P)^{\otimes (n_*-1)},(T^*Q)^{\otimes (n_*-1)}\right)
        \leq
        \pi^{\ls}\alb^{\lsb}\left(\betls(T^*P,T^*Q)\right)^{(n_*-1)}
        \leq 
        (\frac{\pi}{\delta})^{2\ls t}\pi^{\ls} \alb^{\lsb} \beta_*^{(n_*-1)}.
        \end{align*}
 This implies       
        $\beta_*^{n_*-1}
        \geq
        \frac{\delta}{\pi^{\ls}\alb^{\lsb}}(\frac{\delta}{\pi})^{2\ls t}
        =
        (\frac{\delta}{\pi})^{2\ls(1+t)}.$
    Suppose that for $\{T^1,T^2,\dots,T^n\}$ we have $e_\pi(P_{Y_1^n},Q_{Y_1^n})\leq\delta$. Then we can use \cref{thm: e_alpha_lb} to obtain a lower bound on $n$ as follows:
    \begin{align*}
        &\delta 
        \geq
        \frac{1}{4}\pi^{\ls} \alb^{\lsb} \left(\betls(P_{Y_1^n},Q_{Y_1^n})\right)^2
        \overset{(i)}{\geq}
        \frac{1}{4}\pi^{\ls} \alb^{\lsb} \beta_*^{2n} 
        \overset{(ii)}{\geq}
        \left(\frac{\delta}{\pi}\right)^{\ls}\pi^{\ls} \alb^{\lsb} \left(\frac{\delta}{\pi}\right)^{\frac{4\ls(1+t)n}{n_*-1}},
    \end{align*}
    which upon rearranging yields $n \geq \frac{1}{4(1+t)}(n_*-1).$ Above, $(i)$ follows from the lower bound on $\beta^*$ above and the fact that $\frac{1}{4}\geq\left(\frac{\delta}{\pi}\right)^{\ls}$. Letting $t\downarrow 0$ concludes the proof. 
\end{proof}

\section{Distributed Simple Hypothesis Testing Under Communication Constraint}\label{disthyp}
In light of \cref{id-nonid-channels}, it is evident that to study the sample complexity of distributed simple hypothesis testing with sequential interaction and constraints given by a set of channels $\cT$, it is enough to study $\inf_{T\in\cT}\betls(TP,TQ)$. 
In this section, we consider communication (or quantisation) constraints. We define $\cT_D$ as the set of all channels from $\cX$ to $\cY$ where $|\cY|=D$, i.e., each $X_i$ must be quantised to one of $D$ values before transmitting it to the central server. Without loss of generality, we assume $\cY = \{1, 2, \dots, D\}$. In \cite{PenEtal24a}, it was shown that the set $\cT_D$ is a convex polytope whose extreme points are \emph{threshold channels}, defined as follows.

\begin{define}[Threshold channels]\label{def: threshold}
We say $T\in\cT_D$ is a threshold channel if $T$ is characterised by disjoint intervals $\{I_1,I_2,\dots,I_D\}$ covering $\real$, where $T(x)=i$ if and only if $\frac{dP}{dQ}(x)\in I_i$. 
\end{define}

Observe that minimising the Hellinger affinity is equivalent to maximising the Hellinger divergence. 
 Standard data-processing results imply that 
$H_{\ls}(TP, TQ) \le H_{\ls}(P,Q).$
We require lower bounds on $\sup_{T \in \cT_D}H_{\ls}(TP, TQ)$ in terms of $H_{\ls}(P,Q)$; i.e., a ``reverse'' data-processing inequality. Such an inequality was shown in \cite{PenEtal23} for the Hellinger-$1/2$ divergence. Our main technical result of this section is a reverse data-processing inequality for a large class of $f$-divergences that we call \emph{TV-like} $f$-divergences. These are defined as follows: 
\begin{define}[TV-like $f$-divergence]\label{def: tv_f}
We say $D_{f}(P\parallel Q)\triangleq \E_Q[f(\frac{dP}{dQ})]$ is a \textbf{TV-like f-divergence} if $f(x)$ is a non-negative, convex function on $\real_+\cup\{+\infty\}$ with $f(1)=0$ which satisfies
\begin{enumerate}
    \item There exist $b,B\in \real_+$ such that $f$ is bounded by $B$ in $[0,b]$; i.e., for all $x\in[0,b]$, $f(x)\le B$.
    \item There exist $C_1,C_2\geq 0$ such that $f$ is almost linear on $(b, \infty)$; i.e., for all $x\in(b,\infty]$,  $C_1 x\leq f(x)\leq C_2 x$.
\end{enumerate}
\end{define}


    Some notable f-divergences that satisfy the definition above are $TV$, Hellinger-$1/2$, and indeed, all Hellinger-$\lambda$ divergences (c.f.\ \cref{Hlamnice}). Kullback--Leibler and $\chi^2$ divergences are not TV-like (because $f(x) = x\log x$ and $f(x) = x^2-1$ are superlinear). 

\setcounter{thr}{3}
\begin{thr}[Reverse data-processing inequality for TV-like divergences]\label{revdata}
    Let $P$ and $Q$ be two discrete distributions, and let $D_f$ be a TV-like f-divergence with parameters $b,B,C_1,C_2$. Then for any $D\geq 2$ there exists a threshold quantizer $T^*\in \cT_D$ such that the following inequality holds \begin{equation}
        D_f(T^*P\parallel T^*Q)
        \geq
        \frac{1}{52}\min\left\{1,\frac{26C_1}{C_2},\frac{D}{R}\right\}D_f(P\parallel Q), 
    \end{equation}
    where $R=1+\log(\frac{2B}{D_f(P\parallel Q)})$ and in the special case where $P$ and $Q$ are supported on points $k$, we can put $R=\min\{k,1+\log(\frac{2B}{D_f(P\parallel Q)})\}$. 
\setcounter{thr}{6}
\end{thr}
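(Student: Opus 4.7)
The plan is to construct an explicit threshold quantiser $T^{*} \in \cT_D$ via a geometric partition of the likelihood ratio axis, and then lower-bound $D_f(T^{*}P \parallel T^{*}Q)$ by exploiting the TV-like structure of $f$ separately on the tail and on the bulk of the likelihood ratio $r(x) = dP/dQ(x)$.

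First, I split the divergence into a \emph{tail} contribution over $\{r > b\}$ and a \emph{bulk} contribution over $\{r \le b\}$. The tail is handled by always including $b$ as one of the thresholds, isolating the set $\{r > b\}$ as a single output symbol. Because $f$ is almost linear on $(b,\infty]$, we have $Q(\{r>b\})\,f(P(\{r>b\})/Q(\{r>b\})) \ge C_1 \cdot P(\{r>b\})$, while the pre-quantisation tail contribution satisfies $\int_{\{r>b\}} f(r)\,dQ \le C_2 \cdot P(\{r>b\})$, so thresholding at $b$ preserves the tail contribution up to a factor $C_1/C_2$. This is the source of the $26 C_1/C_2$ term in the bound.

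For the bulk, I consider the dyadic partition $J_k = (b\cdot 2^{-k},\, b\cdot 2^{-k+1}]$ for $k = 1,2,\dots,K$, where $K \asymp R = 1 + \log(2B/D_f(P \parallel Q))$, together with the ``negligible'' region $\cN = [0,\,b\cdot 2^{-K}]$. The choice of $K$ is designed so that $\cN$ contributes at most $\tfrac{1}{2}D_f(P\parallel Q)$: the convex bound $f(r) \le B(1-r)$ on $[0,1]$ combined with $P(\cN_{\mathrm{preimg}}) \le b\cdot 2^{-K}\,Q(\cN_{\mathrm{preimg}})$ bounds $\int_{\{r \in \cN\}} f(r)\,dQ$ in terms of $B\cdot 2^{-K} \le D_f/2$. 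This leaves at least half of $D_f(P\parallel Q)$ concentrated on the $K$ bulk intervals plus the tail, so by pigeonhole the best $D-1$ of these intervals carry at least a $(D-1)/K \gtrsim D/R$ fraction of $D_f(P\parallel Q)$.

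I then place $T^{*}$'s thresholds at the endpoints of these top $D-1$ intervals together with $b$. For each isolated dyadic bulk interval $J_k$, the likelihood ratio $r$ varies by a factor of $2$ on $B_k := r^{-1}(J_k)$. Using that $f$ is convex with $f(1)=0$ and bounded by $B$ on $[0,b]$, a two-endpoint argument analogous to the one used in the proof of \cref{techupbound} shows $Q(B_k)\,f(P(B_k)/Q(B_k))$ is within a constant factor of $\int_{B_k} f(r)\,dQ$. Summing over the isolated intervals and adding the tail contribution yields the claimed bound, with the constants $1/52$ and $26$ tracking through the various halvings.

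The main obstacle I expect is the bulk preservation step: showing that the post-quantisation contribution on each dyadic interval $B_k$ is a universal constant fraction of the original contribution, using only the convexity of $f$ and the TV-like bounds, without extra regularity assumptions on $f$. A secondary subtlety is the sharper bound in the case where $P$ and $Q$ are supported on $k$ points, which should follow by replacing $R$ with $\min\{k, R\}$ since only $k$ distinct likelihood ratio values (and thus at most $k$ dyadic intervals) can be populated.
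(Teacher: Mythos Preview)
Your approach partitions the \emph{domain} of $f$ (the likelihood-ratio axis) dyadically, but this does not control the \emph{values} of $f$, and both your ``negligible region'' step and your ``bulk preservation'' step break for this reason. For the negligible region: on $\cN=\{r\le b\cdot 2^{-K}\}$ the function $f$ is not small---indeed $f(r)\to f(0)$ as $r\to0$, and $f(0)$ can equal $B$ (e.g.\ total variation). Your bound $f(r)\le B(1-r)$ only yields $\int_{\cN}f(r)\,dQ\le B\bigl(Q(\cN)-P(\cN)\bigr)\le B$, not $B\cdot 2^{-K}$; nothing prevents $Q$ from placing nearly all its mass on $\cN$ with $r$ near $0$. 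For the bulk preservation: on the dyadic interval $J_k$ containing $r=1$, the aggregate ratio $P(B_k)/Q(B_k)$ can land arbitrarily close to $1$, where $f$ vanishes. Concretely, take the TV-like $f$ with $f(x)=(1-x)^2$ on $[0,2]$, extended linearly for $x>2$; on $J=(1/2,1]$ put $Q$-mass $1-\delta$ at $r=1$ and $Q$-mass $\delta$ at $r=3/4$. Then $\int_{J}f(r)\,dQ=\delta/16$ but $Q(J)\,f\bigl(P(J)/Q(J)\bigr)=(\delta/4)^2=\delta^2/16$, so the preservation ratio is $\delta$, not a universal constant. You rightly flagged this step as the main obstacle; for a domain-based dyadic partition it is in fact fatal.

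The paper's proof avoids both problems by partitioning the \emph{range} of $f$ instead. One sets $Z=f\bigl(r(X)\bigr)\mathbbm{1}_{\{r(X)\le b\}}$ with $X\sim Q$, so $Z\in[0,B)$ and $\E[Z]=S_1$, and applies the reverse Markov inequality of \cite{PenEtal23} directly to $Z$ to obtain levels $0\le\eta_1\le\cdots\le\eta_D=B$ with $\sum_i\eta_i\,\prob\bigl(Z\in[\eta_i,\eta_{i+1})\bigr)\gtrsim\min\{1,D/R\}\,\E[Z]$. Because $f$ is convex, each preimage $f^{-1}\bigl([\eta_i,\eta_{i+1})\bigr)$ is a union of at most two intervals $I_i^1,I_i^2$ on the $r$-axis; on each, the aggregate ratio $P(I_i^j)/Q(I_i^j)$ remains in $I_i^j$, hence $f$ of the aggregate is automatically $\ge\eta_i$. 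This converts the reverse-Markov bound directly into a lower bound on $D_f(T^*P\parallel T^*Q)$ for a threshold quantiser with at most $2D$ outputs, with no per-interval loss and no separate ``negligible region'' argument. Your tail treatment (isolating $\{r>b\}$ and using $C_1/C_2$) does match the paper's Case~1.
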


\begin{proof}[Proof of \cref{revdata}]
    We can write $D_f(P\parallel Q)$ as the sum of two terms 
    \begin{equation}
    D_f(P\parallel Q)
    =
    \underbrace{\E_Q\left[f\left(\frac{dP}{dQ}\right)\mathbbm{1}_{\{\frac{dP}{dQ}\leq b\}}\right]}_{S_1}
    +\underbrace{\E_Q\left[f\left(\frac{dP}{dQ}\right)\mathbbm{1}_{\{\frac{dP}{dQ}> b\}}\right]}_{S_2}
    \end{equation}
    Now either $S_1\geq\frac{1}{2}D_f(P\parallel Q)$ or $S_2\geq\frac{1}{2}D_f(P\parallel Q)$. We analyse these cases separately.\vspace{0.2cm}\\
    \textbf{Case 1.} Suppose $S_2\geq\frac{1}{2}D_f(P\parallel Q)$. Then we have
    \begin{equation}
        \frac{1}{2}D_f(P\parallel Q)
        \leq 
        \E_Q\left[f\left(\frac{dP}{dQ}\right)\mathbbm{1}_{\{\frac{dP}{dQ}> b\}}\right]
        \leq
        C_2 P\left[\frac{dP}{dQ}>b\right],
    \end{equation}
    where we used the fact that $f(x)\leq C_2 x$ for $x>b$.
    Now consider the two-level quantiser $T^*$ that maps any input $x\in\cX$ with $\frac{dP}{dQ}(x)>b$ to $a_1$ and maps the input to $a_2$ otherwise. For this quantiser,
    \begin{equation}
        D_f\left(T^*P\parallel T^*Q\right)
        \geq
        Q\left[\frac{dP}{dQ}>b\right]f\left(\frac{P\left[\frac{dP}{dQ}>b\right]}{Q\left[\frac{dP}{dQ}>b\right]}\right)
        \geq
        C_1 P\left[\frac{dP}{dQ}>b\right].
    \end{equation}
    For the last inequality, we used the fact that $f(x)\geq C_1 x$ for $x>b$. Combining the two bounds above, we obtain
    \begin{equation}
        D_f(T^*P\parallel T^*Q)
        \geq
        \frac{C_1}{2C_2}D_f(P\parallel Q).
    \end{equation}
    
    \textbf{Case 2.} Suppose $S_1\geq\frac{1}{2}D_f(P\parallel Q)$. Define the random variable $Z$ as $Z=f\left(\frac{dP}{dQ}(X)\right)\mathbbm{1}_{\{\frac{dP}{dQ}(X)\leq b\}}$ where $X \sim Q$. Note that $\E[Z]=S_1\geq\frac{1}{2}D_f(P\parallel Q)$. We now recall a reverse Markov inequality originally proved in \cite[Lemma 3.10]{PenEtal23}. 
        \begin{lem}[Reverse Markov inequality \cite{PenEtal23}]
    \label{lem:revMarkovD}
    Let $Z$ be a random variable over $[0, B)$ with expectation $\E[Z] > 0$. Let $R = 1 +  \log(B/\E [Y])$. Then
    \begin{align}
    \label{revMarkov}
    \sup_{0 \le \eta_1 \le  \cdots \le \eta_D = B} \sum_{j=1}^{D-1} \eta_{j} \prob\left( Z \in [\eta_{j}, \eta_{j+1}) \right) \geq \frac{1}{13} \E[Z] \min \left\{1, \frac{D}{R}  \right\}.
    \end{align}
    For the special case where $Z$ is supported on $k$ points, we may set $R = \min\{k,1 +  \log(B/\E [Y])\}$, and there is a $\poly(k,D)$ algorithm to find $\nu_j$'s that achieve the bound~\eqref{revMarkov}.
    \end{lem}
    
    Because $Z\in[0,B)$ we can use \cref{lem:revMarkovD} to get the following inequality
    \begin{align}
    \sup_{0\leq\eta_1\leq\dots\leq\eta_D=B}\sum_{i=1}^{D-1}\eta_i\prob[Z\in[\eta_i,\eta_{i+1})]
    \geq &
    \frac{1}{13}\min\left\{1,\frac{D}{1+\log(\frac{B}{\E[Z]})}\right\}\E[Z]
    \geq \nonumber\\&
    \frac{1}{26}\min\left\{1,\frac{D}{1+\log(\frac{2B}{D_f(P\parallel Q)})}\right\}D_f(P\parallel Q). \label{revmarkz}
    \end{align} 
    Consider any partition 
    $\{0\leq\eta_1\leq\dots\leq\eta_D=B\}$.    
    For each $[\eta_i,\eta_{i+1})$ define $\{I_i^j\}_{j=1}^t$ to be the preimage of $[\eta_i,\eta_{i+1})$ under the function $f$ i.e. $f^{-1}([\eta_i,\eta_{i+1}))$.  Note that $I_i^j$ is always an interval and $t \le 2$ can be at most equal to 2, so, here we make the assumption that if $t=1$ we set $I_i^2=\emptyset$. For $T^*$ that maps $x\rightarrow w^j_i$ when $\frac{dP}{dQ}(x)\in I^j_i$, we have
    \begin{align}
    \sum_{i=1}^{D-1}\eta_i\prob[Z\in[\eta_i,\eta_{i+1})]
    =&
    \sum_{i=1}^{D-1}\sum_{j=1}^{2} \eta_i Q\left[\frac{dP}{dQ}\in I_i^j\right]
    \leq \nonumber \\&
    \sum_{i=1}^{D-1}\sum_{j=1}^{2} Q\left[\frac{dP}{dQ}\in I_i^j\right]f\left(\frac{P[\frac{dP}{dQ}\in I_i^j]}{Q[\frac{dP}{dQ}\in I_i^j]}\right)
    =
    D_f(T^*P\parallel T^*Q). \label{lastineqdata}
    \end{align}
    Where the inequality follows from the fact that $\frac{P[\frac{dP}{dQ}\in I_i^j]}{Q[\frac{dP}{dQ}\in I_i^j]}\in I_i^j$, 
    so we have $f\left(\frac{P[\frac{dP}{dQ}\in I_i^j]}{Q[\frac{dP}{dQ}\in I_i^j]}\right)\geq \eta_i$. Without loss of generality we can assume that $D$ is even and then we get the final result from the combination of \cref{lastineqdata} and \cref{revmarkz}.
\end{proof}

In the following theorem we state our main result about the sample complexity of simple binary hypothesis testing under communication constraints.

\samplecompcommunication*

\begin{proof}[Proof of \cref{thm: sc_D}]
    We start with the proof of the upper bound. Our first lemma states that  $H_\lambda$ are TV-like divergences. The proof is a routine calculation and is deferred to \Cref{app:Hlamnice}.
    \begin{restatable}[Hellinger-$\lambda$ divergence is TV-like]{lem}{Hlamnice}\label{Hlamnice}
        For any $\lambda\in(0,1)$, $H_\lambda(P,Q)=1-\E[(\frac{dP}{dQ})^\lambda]$ is a TV-like f-divergence.
    \end{restatable}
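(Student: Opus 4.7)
The goal is to verify that the function governing the Hellinger-$\lambda$ divergence fits the TV-like template from \cref{def: tv_f}. The immediate snag is that the naive choice $f_0(x)=1-x^\lambda$ is \emph{not} non-negative, since $f_0(x)<0$ for $x>1$. The plan is to replace $f_0$ by an equivalent representative that gives the same divergence but is non-negative, and then to verify both conditions of the definition by direct calculus.

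First, I would observe that for any constant $c\in\mathbb{R}$, the function $f_0(x)+c(x-1)$ defines exactly the same $f$-divergence as $f_0$, because $\E_Q[dP/dQ-1]=0$. Choosing $c=\lambda$, I set
\[
\tilde f(x) \;\triangleq\; 1-x^\lambda+\lambda(x-1),
\]
which satisfies $\tilde f(1)=0$, $\tilde f'(1)=0$, and $\tilde f''(x)=\lambda(1-\lambda)x^{\lambda-2}\ge 0$, so $\tilde f$ is convex with a global minimum at $1$ and hence is non-negative on $[0,\infty)$. Thus $H_\lambda(P,Q)=\E_Q[\tilde f(dP/dQ)]$ with $\tilde f$ a bona fide non-negative convex $f$-divergence generator.

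Next I would check the bounded-on-$[0,b]$ condition: for any $b\ge 1$ and $x\in[0,b]$, using $x^\lambda\ge 0$, we get $\tilde f(x)\le 1+\lambda(b-1)\le 1+\lambda b$, so $B\triangleq 1+\lambda b$ works. For the almost-linear condition on $(b,\infty)$, I would compute
\[
\frac{\tilde f(x)}{x} \;=\; \lambda+\frac{1-\lambda}{x}-x^{\lambda-1},
\]
which tends to $\lambda$ as $x\to\infty$. The upper bound $\tilde f(x)/x\le \lambda+(1-\lambda)/x\le 1$ holds for all $x\ge 1$, giving $C_2=1$. For the lower bound, since $\lambda<1$, I can pick $b$ large enough so that $b^{\lambda-1}\le \lambda/2$, i.e.\ $b\ge (2/\lambda)^{1/(1-\lambda)}$; then for $x\ge b$, $\tilde f(x)/x \ge \lambda - b^{\lambda-1}\ge \lambda/2$, giving $C_1=\lambda/2$. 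Fixing this explicit $b$ and reading off $B,C_1,C_2$ completes the verification.

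The ``main obstacle'' is really just the initial observation about shifting by a linear term; after that, the argument is routine one-variable calculus. It is worth recording the explicit constants, since the downstream application of \cref{revdata} (in the proof of \cref{thm: sc_D}) will want to track how $b$, $B$, $C_1$, $C_2$ depend on $\lambda$, particularly when $\lambda=\ls\in[1/2,1)$ so that the resulting constants stay universal and the bound $H_{\ls}(T^*P,T^*Q)\gtrsim \min\{1,D/R\}\,H_{\ls}(P,Q)$ translates cleanly into the sample-complexity statement.
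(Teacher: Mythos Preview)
Your proof is correct and uses exactly the same shifted generator $\tilde f(x)=(1-\lambda)+\lambda x-x^\lambda$ as the paper, verifying the same two conditions. The one substantive difference is the choice of $b$: the paper takes $b=1/\lambda$, which (via a tangent-line estimate at $x=3/(4\lambda)$) yields $B=1$ and a ratio $C_1/C_2$ that are \emph{universal} in $\lambda$, whereas your $b=(2/\lambda)^{1/(1-\lambda)}$ makes the lower bound $C_1=\lambda/2$ trivial but forces $B=1+\lambda b\to\infty$ as $\lambda\to 1$. Since $\lambda_*$ can be arbitrarily close to $1$, your $B$ would inflate $R=1+\log(2B/D_f)$ in \cref{revdata} and undermine precisely the uniformity you (rightly) flag as needed for \cref{thm: sc_D}; the paper's tighter choice of $b$ is what keeps those downstream constants absolute.
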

    Now set $H_*\triangleq\sup_{T\in\cT_D}H_{\ls}(TP,TQ)$ for $\ls$ chosen as \cref{thm: e_alpha_lb} and let $\beta_*=1-H_*$. From the argument in \Cref{subsec:compare} we know that $\nsseq(P,Q,\pi,\delta,\cT_D)\asymp \frac{\log(\frac{\pi}{\delta})}{\log(\frac{1}{\beta_*})}$ and because for $x\in(0,0.5)$ we have $e^{-2x}\leq 1-x\leq e^{-x}$ we conclude that $\nsseq(P,Q,\pi,\delta,\cT_D)\asymp \frac{\log(\frac{\pi}{\delta})}{H_*}$. We use the fact from \cref{Hlamnice} and \cref{revdata} to conclude that we have $H_*\gtrsim \frac{D}{\log\left(\frac{1}{H_{\ls}(P,Q)}\right)}H_{\ls}(P,Q)$ and we have
    \begin{align}
        \nsseq(P,Q,\pi,\delta,\cT_D)
        \asymp
        \frac{\log(\frac{\pi}{\delta})}{H_*}
        \lesssim
        \log\left(\frac{1}{H_{\ls}(P,Q)}\right)\frac{\log(\frac{\pi}{\delta})}{H_{\ls}(P,Q)}\frac{1}{D}
        \lesssim
        \frac{\log(n^*)}{D}n^*,
    \end{align}
    where in the final inequality we used that $n^* \gtrsim \frac{\log(\pi/\delta)}{H_{\ls}(P,Q)}$. So, $\log(1/H_{\ls}(P,Q)) \lesssim \log(n^*) - \log\log(\pi/\delta) \le \log(n^*).$ 

    To prove the lower bound, we use the optimality of threshold channels proved in \cite{PenEtal24a}. Specifically, \cite[Corollary 3]{PenEtal24a} shows that for any pair of distributions $P$ and $Q$ on $\cX$ where $\cX$ is a finite set, say $\lvert\cX\rvert=k$, threshold channels are the extreme points of the convex set $\cT_D$. So, to maximise a convex function on $\cT_D$, it is enough to maximise it on $\cT_{D}^{\mathrm{thresh}}$. Our next lemma shows that the reverse data-processing inequality from  \cref{revdata} cannot be improved in general; i.e., there exists a pair of distributions $(P, Q)$ such that the Hellinger-$\lambda$ divergence between $(TP, TQ)$ is necessarily reduced by a logarithmic factor. The proof of this lemma is deferred to \Cref{app:datarevsharp}.
    \begin{restatable}[Tightness of reverse data-processing for $H_\lambda$]{lem}{datarevsharp}\label{datarevsharp}
  For every $\lambda\in(0,1)$ there exist positive constants $c_1,c_2,c_3,c_4,c_5$ and $c_6$ such that for every $\rho\in(0,c_1)$ and $D\geq 2$ there exist $k\in[c_2\log(1/\rho),c_3\log(1/\rho)]$ and two distributions $P ,Q$ on $[k+1]$ such that $H_{\lambda}(p,q)\in[c_4\rho,c_5\rho]$ and 
      $\sup_{T\in\cT_{D}^{\mathrm{thresh}}}H_{\lambda}(Tp,Tq)\leq c_6\frac{D}{R}H_{\lambda}(p,q)$
  where $R=\Theta(\log(\frac{1}{H_{\lambda}(p,q)}))$.
\end{restatable}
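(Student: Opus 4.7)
The plan is to construct explicit distributions $P, Q$ on $k+1\asymp \log(1/\rho)$ points that are tight for \Cref{revdata}, by transporting the tight example for the reverse Markov inequality (\Cref{lem:revMarkovD}) into the Hellinger-$\lambda$ framework. Recall that the reverse Markov inequality is saturated by a non-negative random variable $Z$ taking dyadic values $2^j$ with $\prob(Z=2^j)\propto 2^{-j}$, so that $Z$ has $R\asymp \log(B/\E Z)$ ``equal-contribution'' scales and any $D$-level quantisation captures at most a $D/R$ fraction of $\E[Z]$. The goal is to realise $Z = \tilde f(dP/dQ(X))$ with $X\sim Q$, where $\tilde f(x) = 1 - x^\lambda + \lambda(x-1) \ge 0$ is the shifted, non-negative representative of Hellinger-$\lambda$ that agrees with $1 - x^\lambda$ up to a $Q$-mean-zero affine term.

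First, I would set $R = \Theta(\log(1/\rho))$ and design $R$ ``active'' indices $j = 0, 1, \ldots, R-1$ with geometrically decreasing $Q$-masses $q_j\asymp 2^{-j}$ and likelihood ratios $r_j$ defined implicitly by $\tilde f(r_j) = c\rho\, 2^j/R$ for a small constant $c$. By construction, each active point contributes $q_j\tilde f(r_j) \asymp \rho/R$ to $H_\lambda(P,Q)$, so summing gives $H_\lambda(P,Q)\asymp \rho$. To ensure $P$ and $Q$ are both probability distributions without spoiling the dyadic structure, I would pair each active point with a ``compensating'' partner whose likelihood ratio is the other root of $\tilde f(r) = \tilde f(r_j)$; for small $\tilde f(r_j)$ the two roots are nearly symmetric about $r=1$, so a symmetric choice of $Q$-masses balances $\sum P$ and $\sum Q$ automatically.

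Next, I would bound $H_\lambda(TP,TQ)$ for an arbitrary threshold channel $T\in\cT_D^{\mathrm{thresh}}$. Such a $T$ partitions $\real_+$ into $D$ intervals, which induces a partition of the active indices into $D$ bins. For a bin $G = \{j_1 < \cdots < j_s\}$, the geometric decay of $q_j$ yields $\bar q_G := \sum_{j\in G} q_j \asymp q_{j_1}$, and the merged likelihood ratio $\bar r_G$ is pulled close to $r_{j_1}$, the entry carrying the largest $Q$-mass. Using the convexity and monotonicity of $\tilde f$ on each side of $1$, I would then show the per-bin contribution satisfies $\bar q_G\, \tilde f(\bar r_G) = O(q_{j_1}\tilde f(r_{j_1})) = O(\rho/R)$, independent of $|G|$. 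Summing over the $D$ bins gives $H_\lambda(TP,TQ)\lesssim D\rho/R$, and since $R = \Theta(\log(1/\rho)) = \Theta(\log(1/H_\lambda(P,Q)))$ this matches the claimed upper bound.

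The hard part will be controlling the mass balance and the large-$\tilde f(r_j)$ scales without disturbing the dyadic structure. A naive construction in which every active point has likelihood ratio below $1$ forces the ``background'' point to carry a ratio $r_\ast \gg 1$, and its contribution $q_\ast \tilde f(r_\ast)$ can actually dominate the intended $\rho$, destroying both the total $H_\lambda(P,Q)\asymp\rho$ estimate and the per-bin argument after quantisation. The symmetric pairing strategy described above fixes the small-$\tilde f$ scales cleanly, but the handful of largest scales, where $\tilde f(r_j)$ is of order $B = 1-\lambda$ and the two roots of $\tilde f(r)=v_j$ are no longer symmetric, requires a separate $O(1)$-many correction points whose contribution must be shown to be absorbed into the main $O(\rho/R)$ bound. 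The range $k\in[c_2\log(1/\rho),\, c_3\log(1/\rho)]$ in the statement is then a direct consequence of the range of $R$ used in the construction, and the special case $R = \min\{k, 1+\log(2B/\E[Z])\}$ of \Cref{lem:revMarkovD} is what allows the support size $k$ itself to enter the bound.
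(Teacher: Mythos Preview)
Your outline is workable but takes a substantially harder route than the paper. The paper avoids rebuilding the dyadic example by directly invoking the tightness lemma for the reverse Markov inequality from \cite{PenEtal23}, which already supplies a distribution $\tilde q$ on $k\asymp\log(1/\rho)$ points in $(0,\sqrt{0.5}]$ with $\E_{\tilde q}[\tilde X^2]\asymp\rho$ and the required $O(D/R)$ upper bound under any $D$-level thresholding. It then sets $q_i=\tilde q_i$ and $p_i=(1-\tilde\delta_i^2)^{1/\lambda}q_i$ so that $H_\lambda(P,Q)=\E[\tilde X^2]$ on the nose; your entire mass-balance ``hard part'' evaporates because the leftover $P$-mass is dumped on a single extra point with $q_{k+1}=0$, and since $f'(\infty)=0$ for $f(x)=1-x^\lambda$ this point contributes nothing to $H_\lambda$ on either side of any channel (this is exactly why the support in the statement is $[k{+}1]$, not $[k]$). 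For the upper bound on $H_\lambda(TP,TQ)$ the paper does not analyse $\tilde f(\bar r_G)$ bin by bin; it instead passes to $H_{1/2}$ via \cref{Hlambdaupper}, uses $(\sqrt{q'}-\sqrt{p'})^2\le(q'-p')^2/q'$, and then the one-line inequality $1-(1-x^2)^{1/\lambda}\lesssim x$ to reduce each bin's contribution to the form $\prob[\tilde X\in I]\,(\E[\tilde X\mid\tilde X\in I])^2$ that the cited lemma already controls. Your heuristic ``$\bar r_G$ is pulled toward $r_{j_1}$'' is the same quadratic-near-$1$ phenomenon, but routing through $H_{1/2}$ handles all scales uniformly and spares you the large-$\tilde f$ corrections and symmetric-pairing gymnastics you anticipate needing.
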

Consider $P$ and $Q$ such that $H_*\lesssim\frac{D}{\log\left(\frac{1}{H_{\ls}(P,Q)}\right)}H_{\lambda}(P,Q)$. For this pair, we have
\begin{align*}
    n^*(P,Q,\pi,\delta,\cT_D)
    \asymp
    \frac{\log(\frac{\pi}{\delta})}{H_*}
    \gtrsim&
    \frac{\log(\frac{\pi}{\delta})}{DH_{\ls}(P,Q)}\log\left(\frac{1}{H_{\ls}(P,Q)}\right)
    \asymp
    \frac{n^*}{D}
    \log\left(\frac{1}{H_{\ls}(P,Q)}\right).
\end{align*}
\end{proof}

\section{Conclusion}
In this paper, we resolved two open problems related to the sample complexity of simple binary hypothesis testing. We showed that sequential interaction is of limited help, and we fully characterised the sample complexity under communication constraints. Our main technical tools, namely, the one-shot lower bound on Bayes error and the reverse data-processing inequality for $f$-divergences, are likely to have other applications as well, such as tightening Le Cam-style lower bounds in statistical estimation. There are several future research directions that can arise from this work. The question of whether the fully-interactive setting, also called the blackboard protocol (see \cite{KusNis96}) significantly reduces the sample complexity is still open. Finally, it would be interesting to see if the Bayes error lower bound from \cref{thm: e_alpha_lb} can be modified so that a lower bound can be computed without knowing the Bayes error in the first place. Such a bound could be tighter than Fano's information-theoretic lower bound and have numerous applications in statistical estimation and testing.

\section*{Acknowledgments}

We thank the anonymous reviewers of COLT 2025 for their helpful feedback. VJ is grateful for the hospitality of the Simons Institute at Berkeley where most  of this work was done.
\printbibliography

\appendix

\section{Proofs for  \Cref{one-shot}}
\subsection{Proof of  \Cref{techupbound}}\label{app: affinity-upperbound}

In this part, we prove \Cref{techupbound}, remember that we denote $\betl(\Ber(p),\Ber(q))$ by $\betl(p,q)$ in this lemma. also remember that we had
\begin{equation}
    \ls=\frac{\log(\frac{\alb}{\delta})}{\log(\frac{\alb}{\delta})+\log(\frac{\pi}{\delta})}
        \in[0.5,1).
\end{equation}
\techupbound*
\begin{proof}
    Suppose $e_\pi(p,q)=\delta \in (0,\pi)$. First, let us compute an explicit formula for $e_\pi(p,q)$.
    \begin{equation*}
    e_\pi(p,q)=\min\{\pi p,\alb q\} + \min\{\pi\pb,\alb\qb\}
    \end{equation*}
    Now if both minimums occur in the first argument of minimum or both occur in the second argument we will either have $\delta=\pi$ or $\delta=\alb$ which is not acceptable because both $\pi$ and $\alb$ are greater than $\delta$. So we have one of the following settings
    \begin{equation}\label{lines}
    \textbf{Line1:}\ \delta=\pi p +\alb(1-q)\quad\text{or}\quad \textbf{Line2:}\ \delta=\alb q+\pi(1-p).
    \end{equation}
    We define the set $L_1$(resp.\ $L_2$) to be all acceptable pairs $(p,q)$ on \textbf{Line1} (resp.\ \textbf{Line2}). One can see that we have the following forms for these two sets
    \begin{align}
        & L_1=\left\{(p,q)|\ q=\frac{\pi}{\alb}p+(1-\frac{\delta}{\alb})\ for\ p\in\left[0,\frac{\delta}{\pi}\right]\right\}\\
        &L_2=\left\{(p,q)|\ q=\frac{\pi}{\alb}p+\frac{\delta-\pi}{\alb}\ for\ p\in\left[1-\frac{\delta}{\pi},1\right]\right\}
    \end{align}
    It is also true that we can obtain $L_2$ from $L_1$ by substituting $(p,q)$ by $(\pb,\qb)$, essentially we have 
    $L_2=\{(\pb,\qb)\ s.t.\ (p,q)\in L_1\}$. Now the goal is to maximize $\betl(p,q)$ on $L_1\cup L_2$ but note that $\betl(p,q)=\betl(\pb,\qb)$, so without loss of generality we can do this maximization only on $L_1$. For points in $L_1$ it is easy to see that $p\in[0,\frac{\delta}{\pi}]$ and $q\in[1-\frac{\delta}{\alb},1]$. So we know that $\pb,q\neq 0$, and hence we can write 
    \begin{align}
        \max_{(p,q)\in L_1}\betl(p,q)
        &=
        \max_{(p,q)\in L_1} \left\{q\left(\frac{p}{q}\right)^{\lambda}+\Bar{p}\left(\frac{\Bar{q}}{\Bar{p}}\right)^{\lamb}\right\}
        \nonumber\\
        &\leq
        \max_{\substack{(p_0,q_0)\in L_1\\(p_1,q_1)\in L_1\\(p_2,q_2)\in L_1}}\left\{q_0\left(\frac{p_1}{q_1}\right)^{\lambda}+\Bar{p_0}\left(\frac{\Bar{q_2}}{\Bar{p_2}}\right)^{\lamb}\right\}
        \nonumber\\
        &=\label{maxbetl}
        \max_{(p_0,q_0)\in L_1}\left\{q_0 \left(\max_{(p_1,q_1)\in L_1}\frac{p_1}{q_1}\right)^{\lambda}+\Bar{p_0}\left(\max_{(p_2,q_2)\in L_1}\frac{\Bar{q_2}}{\Bar{p_2}}\right)^{\lamb}\right\}\,.
    \end{align}
    We can calculate the interior maximums explicitly as below
    \begin{align}
        \label{max1}
        & \max_{(p_1,q_1)\in L_1}\frac{p_1}{q_1}
        = 
        \max_{p_1\in [0,\frac{\delta}{\pi}]}\frac{p_1}{\frac{\pi}{\alb}p_1+1-\frac{\delta}{\alb}}
        = 
        \max_{p_1\in (0,\frac{\delta}{\pi}]}\frac{1}{\frac{\pi}{\alb}+(1-\frac{\delta}{\alb})\frac{1}{p_1}}
        =
        \frac{\delta}{\pi} \\&\label{max2}
        \max_{(p_2,q_2)\in L_1}\frac{\Bar{q_2}}{\Bar{p_2}}
        = 
        \max_{p_2\in[0,\frac{\delta}{\pi}]}\frac{\frac{\delta}{\alb}-\frac{\pi}{\alb}p_2}{\Bar{p_2}} 
        = 
        \max_{p_2\in[0,\frac{\delta}{\pi}]}\frac{\pi}{\alb}-\frac{\pi-\delta}{\alb}\frac{1}{\Bar{p_2}}
        =
        \frac{\delta}{\alb}
    \end{align}
    Here we choose $\lambda$ such that $(\frac{\delta}{\pi})^\lambda=(\frac{\delta}{\alb})^{\lamb}$ and it is easy to check that we should choose $\lambda=\ls=\frac{\log(\frac{\alb}{\delta})}{\log(\frac{\alb}{\delta})+\log(\frac{\pi}{\delta})}$. Now we just need to optimize on $(p_0,q_0)$. By substituting the values from \cref{max1} and \cref{max2} in \cref{maxbetl} we have
    \begin{align}
        \max_{(p,q)\in L_1}\betls(p,q) 
        \leq& 
        \left(\frac{\delta}{\pi}\right)^{\ls}\max_{(p_0,q_0))\in L_1}\{q_0+\Bar{p_0}\}\nonumber\\ 
        =&
        \left(\frac{\delta}{\pi}\right)^{\ls}\max_{p_0\in [0,\frac{\delta}{\pi}]}\left\{\left(2-\frac{\delta}{\alb}\right)-\left(1-\frac{\pi}{\alb}\right)p_0\right\} 
        \nonumber\\=& 
        \left(\frac{\delta}{\pi}\right)^{\ls}\left(2-\frac{\delta}{\alb}\right)\,,
    \end{align}
    which is stronger than the claimed bound.
\end{proof}

\subsection{Comparison of \cref{samplecomplxsimplecor} and the work of \cite{PenEtal24b}}\label{subsec:compare}

Here, our aim is to compare our result with that obtained by \cite{PenEtal24b}, so let us state their result in the following theorem.
    \begin{thr}[Theorem 2.1 of \cite{PenEtal24b}]\label{prevresult}
        Let $P$ and $Q$ be two arbitrary discrete distributions satisfying $H_{0.5}(p,q) \leq 0.125$.	%
    Let $\pi \in (0,1/2]$ be the prior parameter and $\delta \in
    (0,\pi/4]$ be the desired average error probability.
    \begin{enumerate}[noitemsep]
    \item (Linear error probability) If $\delta \in [\pi/100, \pi/4]$, then for $\lambda = \frac{(0.5 \log 2)}{\log(1/\pi)}$, we have
    \begin{align}
    \label{eq:main-thm-constant-failure}
    n^*(P,Q,\pi,\delta) 
    \asymp \frac{1}{\cH_{1 - \lambda} (p,q)}\,.
    \end{align}
    \item (Sublinear error probability) If $\delta \in \left[\pi^2, \frac{\pi}{100}\right]$, then for $T = \big\lfloor \frac{\log\left( \pi/\delta\right)}{ \log 8} \big\rfloor$ and $\pi' := \pi^{\frac{1}{T}}$, we have
    \begin{align}
    n^*(P,Q,\pi,\delta) &\asymp \log\left(   \pi/\delta\right) \cdot n^*(P,Q,\pi',\pi'/4) \asymp \frac{\log(\pi/\delta)}{H_{1-\lambda}(p,q)}, \text{ with $\lambda = \frac{0.5 \log 2}{\log(1/\pi')}$}
    \end{align}
    where the last expression can be evaluated using \Cref{eq:main-thm-constant-failure}.

    \item (Polynomial error probability) If $\delta \leq \pi^2$, then $n^*(P,Q,\pi,\delta) \asymp \frac{\log(1/\delta)}{H_{0.5}(p,q)}$.
    \end{enumerate}
    \end{thr}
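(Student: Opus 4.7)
The plan is to derive \Cref{prevresult} directly from \Cref{samplecomplxsimplecor}. The starting point is to rewrite our formula as
\begin{equation*}
n^*(P,Q,\pi,\delta) \;\asymp\; \frac{\ls \log(\pi/\delta)}{H_{\ls}(P,Q)}, \qquad \ls = \frac{\log(\alb/\delta)}{\log(\alb/\delta) + \log(\pi/\delta)},
\end{equation*}
where the original denominator $\log(1/\betls)$ has been replaced by $H_{\ls}$ using the equivalence $-\log(1-x)\asymp x$ for $x\in(0,1/2]$. This replacement is valid because the standing assumption $H_{1/2}(P,Q)\le 0.125$, combined with \Cref{Hlambdaupper} applied to $(\alpha,\beta)=(1/2,\ls)$ (noting $\ls\ge 1/2$), gives $H_{\ls}(P,Q)\le 2\ls\, H_{1/2}(P,Q)\le 1/4$.

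For each of the three subregimes of $\delta$ in \Cref{prevresult}, the verification strategy is the same: (i) compute $\ls$ and $\ls\log(\pi/\delta)$ in terms of $\pi$ and $\delta$; (ii) use \Cref{Hlambdaupper} to show that $H_{\ls}(P,Q)$ is within universal constants of the Hellinger-$\lambda$ divergence appearing in the target formula. In case 3 ($\delta\le\pi^2$), the inequality $\log(1/\pi)\le\tfrac12\log(1/\delta)$ makes both $\log(\pi/\delta)$ and $\log(\alb/\delta)$ of order $\log(1/\delta)$, so $\ls\in[1/2,c_0]$ for some universal $c_0<1$ and $\ls\log(\pi/\delta)\asymp\log(1/\delta)$; a further application of \Cref{Hlambdaupper} then yields $H_{\ls}\asymp H_{1/2}$, reproducing $n^*\asymp\log(1/\delta)/H_{1/2}$. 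In case 1 ($\delta\in[\pi/100,\pi/4]$), the factor $\log(\pi/\delta)$ is a universal constant while $\log(\alb/\delta)\asymp\log(1/\pi)$ for small $\pi$, so $1-\ls=\Theta(1/\log(1/\pi))=\Theta(\lambda)$ with $\lambda=0.5\log 2/\log(1/\pi)$; by \Cref{Hlambdaupper} (applied with both parameters close to $1$), $H_{\ls}\asymp H_{1-\lambda}$, matching the formula in \eqref{eq:main-thm-constant-failure}. Case 2 is the interpolating regime: setting $\pi'=\pi^{1/T}$ with $T=\lfloor\log(\pi/\delta)/\log 8\rfloor$, the same calculation shows that $\ls(\pi,\delta)$ is comparable up to constants to $1-\lambda(\pi')$ and that $\ls\log(\pi/\delta)\asymp\log(\pi/\delta)$, recovering the nested formula via case 1 applied at $(\pi',\pi'/4)$.

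The main technical obstacle is bookkeeping: one must track how $\ls$ depends on the pair $(\pi,\delta)$ and verify that the comparability constants delivered by \Cref{Hlambdaupper} remain universal (rather than depending on $\pi$) in each subregime, which is particularly delicate in case 1 where $\ls$ and $1-\lambda$ both tend to $1$. A secondary subtlety is the mismatch between the parameter regions: \Cref{samplecomplxsimplecor} requires $\delta\le\pi/16$, whereas \Cref{prevresult} case 1 extends to $\delta\le\pi/4$ and case 3 allows $\delta\le\pi^2$ (which exceeds $\pi/16$ when $\pi>1/16$). In both intermediate gaps the target formula depends on $\delta$ only through $\log(\pi/\delta)=\Theta(1)$, so the bound can be transferred from $\delta=\pi/16$ by monotonicity of $n^*$ in $\delta$ at the cost of a universal constant.
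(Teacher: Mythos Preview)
Your proposal is correct and follows essentially the same route as the paper: start from \Cref{samplecomplxsimplecor}, simplify to $n^*\asymp \log(\pi/\delta)/H_{\ls}(P,Q)$ under the standing assumption $H_{1/2}\le 1/4$, then in each of the three regimes compute $1-\ls$ and show it matches the paper's $\lambda$ up to universal constants, finishing with \Cref{Hlambdaupper}. Your treatment is in fact slightly more careful than the paper's on one point: the paper notes the range mismatch (its \Cref{samplecomplxsimplecor} requires $\delta\le\pi/16$ while \Cref{prevresult} allows $\delta\le\pi/4$) but does not address it, whereas you correctly close the gap via monotonicity of $n^*$ in $\delta$.
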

    The first difference is that their result holds for all $\delta\in(0,\frac{1}{4}\pi]$ but we can not have $\delta=\pi/4$ in our result and if we want to cover a wider range of $\delta$ the coefficient $(1-c)$ gets close to 0 which is not desirable, so let us fix the range $\delta\in(0,\frac{1}{16}\pi]$.
    The next difference is that our result yields an upper bound and lower bound that differ by a factor of 4, whereas \Cref{prevresult} has much larger constants.
    
    Under the assumption that $H_{0.5}(P,Q) \leq 0.25$, the sample complexity of \Cref{samplecomplxsimplecor} becomes 
    \begin{align}
    \label{eq:this-paper-simplified}
        n^*(P,Q,\pi,\delta)\asymp \frac{ \log(\frac{\pi}{\delta})}{ \cH_{\lambda_*}(P,Q)} \text{  for $\ls=\frac{\log(\frac{\alb}{\delta})}{\log(\frac{\alb}{\delta})+\log(\frac{\pi}{\delta})}$} \,.
    \end{align}
    For completeness, we show this at the end of this section.

    We now verify that both \Cref{eq:this-paper-simplified} and \Cref{prevresult} yield same expressions up to constants.
    Observe that $\ls$ in \Cref{eq:this-paper-simplified} lies in $[0.5,1]$ and while the $\lambda$ values in \Cref{prevresult} lie in $[0,0.5]$. Thus,  we need to compare $1-\ls$ with the value of $\lambda$ provided in \cref{prevresult}. We check this fact for three regimes below
    \begin{itemize}
        \item (Linear error probability) Suppose $\delta=c\pi$ for some $c\in[1/100,1/16]$. The expressions (up to constants) given by \Cref{prevresult} and \Cref{eq:this-paper-simplified} are $\frac{1}{\cH_{1 - \lambda}(P,Q)}$ and $\frac{1}{\cH_{ \lambda^*}(P,Q)}$, respectively.
        Since $\lambda_* \geq 0.5$ and $\lambda \leq 0.5$, the desired claim follows from \Cref{Hlambdaupper} once we show $1 - \ls \asymp \lambda$.
        Indeed, we 
        have
        \begin{equation}
            1-\ls
            =
            \frac{\log(\frac{\pi}{\delta})}{\log(\frac{\alb}{\delta})+\log(\frac{\pi}{\delta})}
            =
            \frac{\log(\frac{1}{c})}{\log(\frac{\alb}{c^2\pi})}
            \asymp
            \frac{1}{\log(\frac{1}{\pi})}\,.
        \end{equation}
                
        \item (Sub-linear error probability) 
        Recall that in the second regime of the \cref{prevresult}, the sample complexity is within the constant factors of $\frac{\log(\pi/\delta)}{H_{1-\lambda}(p,q)}$ where $\lambda=\frac{0.5\log(2)}{\log(1/\pi')}$. 
        Comparing with \Cref{eq:this-paper-simplified},  it suffices to show that $\cH_{\lambda_*}(P,Q) \asymp \cH_{1 - \lambda}(P,Q)$.
        As argued above, it suffices to show $1 - \ls \asymp \lambda \asymp \frac{\log(\pi/\delta)}{\log(1/\pi)}$, where the last relation uses the value of $\pi'$. Now let $\delta=\pi^t$ for some $t\in [1,2]$. We have
        \begin{equation}
            1-\ls=
            \frac{\log(\frac{\pi}{\delta})}{\log(\frac{\pi\alb}{\delta^2})}
            =
            \frac{\log(\frac{\pi}{\delta})}{\log(\frac{\alb}{\pi^{2t-1}})}
            \asymp
            \frac{\log(\frac{\pi}{\delta})}{\log(\frac{1}{\pi})}\,.
        \end{equation}
        \item (Polynomial error probability)
        Now if $\delta=\pi^t$ for some $(t\geq 2)$, we can write
        \begin{equation}
            1-\ls=
            \frac{\log(\frac{\pi}{\delta})}{\log(\frac{\pi\alb}{\delta^2})}
            =
            \frac{\log(\frac{1}{\pi^{t-1}})}{\log(\frac{\alb}{\pi^{2t-1}})}
            \asymp
            1,.
        \end{equation}
        \end{itemize}
Hence, we conclude that the sample complexity bound that we derived is indeed within a constant factor of the sample complexity bounds derived in \cite{PenEtal24b}.

    \paragraph{Justifying \Cref{eq:this-paper-simplified}.} 
        We begin by imposing conditions that allows us to omit the ceil function from \Cref{samplecomplxsimplecor}.
        To do so, it is enough to ensure that we always have 
    \begin{equation}\label{regonbetl}
        0.5\ls\frac{\log(\frac{\pi}{\delta})}{\log\left(\frac{1}{\betls(P,Q)}\right)}\geq 1
    \end{equation}
    because then we can say that the sample complexity is always within the constant factors of $\frac{\log(\frac{\pi}{\delta})}{\log\frac{1}{\betl(P,Q)}}$. Since $\lambda_* \geq 0.5$ and $\delta \leq \pi/16$,
  it is enough to have $\betls(P,Q)\geq (\frac{1}{16})^{\frac{1}{4}}=0.5$ for \cref{regonbetl} to hold. Indeed,
    \begin{align*}
        \left(\frac{\delta}{\pi}\right)^{0.5\ls}
        \leq
        \left(\frac{1}{16}\right)^{\frac{1}{4}}
        \leq
        \betls\left(P,Q\right)
        \Rightarrow
        \frac{1}{\betls\left(P,Q\right)}
        \leq
        \left(\frac{\pi}{\delta}\right)^{0.5\ls}
        \Rightarrow
        \log\left(\frac{1}{\betls\left(P,Q\right)}\right)
        \leq
        0.5\ls\log\left(\frac{\pi}{\delta}\right)\,.
    \end{align*}
    Hence, it suffices to assume that $\betls(P,Q)\geq 0.5$ or equivalently $H_{\ls}(P,Q)\leq 0.5$. Also using \cref{Hlambdaupper}, we can see that it is enough to have $H_{0.5}(P,Q)\leq 0.25$ for this condition to hold. This is similar to the assumption in \Cref{prevresult} of $H_{0.5}\leq 0.125$. 
    Finally when $\betls(P,Q)\geq 0.5$, we have that $\log(1/\betls(P,Q) =  \log \left(\frac{1}{1- \cH_{\ls}(P,Q)}\right) \asymp \cH_{\ls}(P,Q)$ since $\cH_{\ls}(P,Q) \in [0,0.5]$ and for $x \in [0,0.5]$, we have that $\log(1/(1-x)) \asymp \log(1 + x) \asymp x$.
\subsection{Proof of \cref{idchannels}}\label{lem2proof}
Remember that we denoted $\beta_*\triangleq\inf_{T\in\cT}\betls(TP,TQ)$ and assumed that the channels $(T^1,T^2,\dots,T^n)$ are chosen in a sequentially interactive way, which means that channel $T^j$ can get as input $X_j$ and $Y_1^{j-1}=(Y_1,Y_2,\dots,Y_{j-1})$ but for any fixed choice of $y_1^{j-1}=(y_1,y_2,\dots,y_{j-1})$ the channel from $X_j$ is in the family of channels $\cT$. In this lemma, we denote the output probability of this sequence of channels with respect to the input probability $P^{\otimes n}$ (respectively, $Q^{\otimes n}$) by $P_{Y_1^n}$ (respectively, $Q_{Y_1^n}$). Here we state the proof of the following lemma.   
\idchannels*
\begin{proof}[Proof of  \cref{idchannels}]
    We prove this lemma by induction on $n$. For $n=1$, the inequality follows from the definition of $\beta_*$. So, suppose the statement is true for all $k\leq n-1$, we prove it for $n$. We can write the Hellinger affinity as below
    \begin{align}
    \betl(P_{Y_1^n},Q_{Y_1^n})
    =&
    \E_{Q_{Y_1^n}}\left[\left(\frac{P_{Y_1^n}}{Q_{Y_1^n}}\right)^\lambda\right]
    \nonumber\\
    &=
    \E_{Q_{Y^{n-1}_1}}\E_{Q_{Y_n|Y_1^{n-1}}}\left[\left(\frac{P_{Y_1^{n-1}}}{Q_{Y_1^{n-1}}}\right)^\lambda\left(\frac{P_{Y_n|Y_1^{n-1}}}{Q_{Y_n|Y_1^{n-1}}}\right)^\lambda\right]
    \nonumber\\&=
    \sum_{y_1^{n-1}\in\cY^{n-1}}Q_{Y_1^{n-1}}(y_1^{n-1})\left(\frac{P_{Y_1^{n-1}}(y_1^{n-1})}{Q_{Y_1^{n-1}}(y_1^{n-1})}\right)^\lambda\E_{Q_{Y_n|Y^{n-1}_1=y_1^{n-1}}}\left[\left(\frac{P_{Y_n|Y^{n-1}_1=y_1^{n-1}}}{Q_{Y_n|Y_1^{n-1}=y_1^{n-1}}}\right)^\lambda\right]\,.\label{betlexpansion}
    \end{align}
    Now for some $\epsilon>0$, let $\bar{y}_1^{n-1}\in\cY^{n-1}$ be such that 
    \begin{equation}\label{eq:inf-affinity}
    (1-\epsilon)\E_{Q_{Y_n|Y_1^{n-1}=\bar{y}_1^{n-1}}}\left[\left(\frac{P_{Y_n|Y_1^{n-1}=\bar{y}_1^{n-1}}}{Q_{Y_n|Y^{n-1}_1=\bar{y}_1^{n-1}}}\right)^\lambda\right]
    \leq
    \inf_{y^{n-1}_1\in\mathcal{Y}^{n-1}}\E_{Q_{Y_n|Y_1^{n-1}=y_1^{n-1}}}\left[\left(\frac{P_{Y_n|Y_1^{n-1}=y_1^{n-1}}}{Q_{Y_n|Y_1^{n-1}=y_1^{n-1}}}\right)^\lambda\right]\,.
    \end{equation}
    Then by using the induction assumption and \cref{eq:inf-affinity} we can lower bound  \cref{betlexpansion} 
    \begin{align}
        \betl(P_{Y_1^n},Q_{Y_1^n})
        &=
        \sum_{y_1^{n-1}\in\cY^{n-1}}Q_{Y_1^{n-1}}(y_1^{n-1})\left(\frac{P_{Y_1^{n-1}}(y_1^{n-1})}{Q_{Y_1^{n-1}}(y_1^{n-1})}\right)^\lambda\E_{Q_{Y_n|Y^{n-1}_1=y_1^{n-1}}}\left[\left(\frac{P_{Y_n|Y^{n-1}_1=y_1^{n-1}}}{Q_{Y_n|Y_1^{n-1}=y_1^{n-1}}}\right)^\lambda\right]\nonumber
        \\&\geq
        (1-\epsilon)\E_{Q_{Y_n|Y_1^{n-1}=\bar{y}_1^{n-1}}}\left[\left(\frac{P_{Y_n|Y_1^{n-1}=\bar{y}_1^{n-1}}}{Q_{Y_n|Y^{n-1}_1=\bar{y}_1^{n-1}}}\right)^\lambda\right]\beta_*^{n-1}
\nonumber        \\&\geq
        (1-\epsilon)\beta_*^{n},
    \end{align}
    where we used the assumption of induction, \cref{eq:inf-affinity} for the first inequality, and the definition of $\beta_*$ for the last inequality. Now, it is enough to let $\epsilon\downarrow 0$ to complete the proof.
\end{proof}

\section{Proofs for  \Cref{disthyp}}
\subsection{Proof of \cref{Hlamnice}}\label{app:Hlamnice}
\Hlamnice*
\begin{proof}[Proof of \cref{Hlamnice}]
Let $f_\lambda(x) = \lamb + \lambda x - x^\lambda$. One can verify that $H_{\lambda}(P,Q)=\E[f(\frac{dP}{dQ})]$. We prove the following properties for $f_{\lambda}(x)$.
    \begin{enumerate}[label=\roman*.]

\item \emph{Non-negativity}: $f_\lambda(x) \ge 0$ for all $x \in [0, \infty)$, and $f_\lambda(1) = 0$. This fact follows from the AM-GM inequality.

\item \emph{Monotonicity}: $f_\lambda(\cdot)$ is  decreasing on $[0,1]$ and increasing on $[1, \infty)$. This is easily seen by computing the derivative, \begin{align*}
f_\lambda'(x) = \lambda - \lambda x^{\lambda - 1},
\end{align*}
and checking its sign.

\item \emph{Boundedness over an interval}: $f_\lambda(x) \le 1$ for $x \in [0, 1/\lambda]$. As $f_\lambda(0) = \bar \lambda < 1$, the monotonicity property implies it is enough to check $f_\lambda(1/\lambda) \le 1$. Plugging in the value of $x$ to be $1/\lambda$,
\begin{align*}
f_\lambda(1/\lambda) &= \bar \lambda + 1 - (1/\lambda^\lambda).
\end{align*}
Differentiating with respect to $\lambda$,
\begin{align*}
\frac{d}{d\lambda} f_\lambda(1/\lambda) &= -1 + \lambda^{-\lambda}(1+\log \lambda) \le 0,
\end{align*}
where the final inequality uses $\lambda^\lambda \ge \lambda \ge 1 + \log \lambda$. Thus, $f_\lambda(1/\lambda)$ is maximized in the limit as $\lambda \to 0$. As $\lambda \to 0$, it can easily be verified that $f_\lambda(1/\lambda)$ tends to 1. Hence, we conclude that $f_\lambda(1/\lambda) \le 1$. 

\item \emph{Eventual linearity}: For $x \ge 1/\lambda$, we claim that $f_\lambda(x) \asymp \lambda x$. Specifically,  
\begin{align*}
c_1 \lambda x \le f_\lambda(x) \le c_2 \lambda x,
\end{align*}
where $c_1 = (1-\sqrt{2/3})/4$ and $c_2 = 2$. The upper bound is almost immediate:
\begin{align*} 
f_\lambda(x) \le \bar \lambda + \lambda x \le 2\lambda x,
\end{align*}
where we used $\bar \lambda < 1 \le x\lambda$ when $x \ge 1/\lambda$.

To prove the lower bound, we use the convexity of $f_\lambda(\cdot)$ to write
\begin{align*}
f_\lambda(x) &\ge (x - 3/(4\lambda))f_\lambda'(3/(4\lambda)) \ge x f_\lambda'(3/(4\lambda))/4.
\end{align*}
Above, we used that since $x \ge 1/\lambda$, $(x - 3/(4\lambda)) \ge x - 3x/4 = x/4.$ Evaluating $f_\lambda'(3/(4\lambda))$,
\begin{align*}
f_\lambda'(3/(4\lambda)) &= \lambda \left( 1 - (4\lambda/3)^{\bar \lambda}\right).
\end{align*}
It is not hard to check that $\left( 1 - (4\lambda/3)^{\bar \lambda}\right)$ is a decreasing function of $\lambda$ over $[0,1/2]$, and hence
\begin{align*}
f_\lambda'(3/(4\lambda) \ge \lambda (1-\sqrt{2/3}).
\end{align*}
Plugging in, we conclude the result.
\end{enumerate}
\end{proof}
\subsection{Proof of \cref{datarevsharp}}\label{app:datarevsharp}
\datarevsharp*
To prove  \cref{datarevsharp}, we use the following lemma from \cite{PenEtal23}.
\begin{lem}[Tightness of reverse Markov inequality]
\label{ReverseMarkovTight}
There exist constants $c_1,c_2,c_3, c_4,c_5$, and $c_6$ such that for every $\rho \in (0,c_5)$,
there exists an integer $k \in [c_3 \log(1/\rho), c_4 \log(1/\rho)]$ and a probability distribution $\Tilde{q}$,  supported over $k$ points in $(0,0.5]$, such that the following hold: 
\begin{enumerate}
  \item  If $\Tilde X^2 \sim \Tilde q$, then $\E[\Tilde{X}^2] \in [c_1\rho,c_2 \rho] $ 
  \item  For every $D \leq 0.1k$,
\begin{align}
\label{eq:RevMarkovTight1}
\sup_{0 < \delta_1 < \cdots < \delta_D = 1} \sum_{j=1}^{D-1} \prob\left\{ \Tilde{X} \geq \delta_j \right\} \left( \E\left[ \Tilde{X} | \Tilde{X} \geq \delta_j \right] \right)^2 
 \leq c_6 \cdot\E [\Tilde{X}^2]  \frac{ D }{R},
\end{align}
where $R = \max \{k, \log(3/\E[\Tilde{X}^2])\} $.
 \end{enumerate}  
\end{lem}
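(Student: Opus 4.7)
The strategy is to feed Lemma \ref{ReverseMarkovTight} into a Hellinger-$\lambda$ construction and then match the threshold-quantised divergence back to the reverse-Markov sup that the lemma bounds.

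Apply Lemma \ref{ReverseMarkovTight} at parameter $\rho$ to obtain $k\asymp\log(1/\rho)$ and a distribution $\tilde q$ on $x_1<\dots<x_k$ in $(0,1/2]$ with $\E[\tilde X^2]\in[\tilde c_1\rho,\tilde c_2\rho]$, where $\tilde X^2\sim\tilde q$. Set $\xi_j:=\sqrt{x_j}\in(0,1/\sqrt 2]$, so $\tilde X$ is supported on $\xi_1,\dots,\xi_k$ with $\prob(\tilde X=\xi_j)=\tilde q(x_j)$. On $[k+1]$, define $Q(j)=\tilde q(x_j)/2$ for $j\le k$, $Q(k+1)=1/2$, and, for a small constant $c=c(\lambda)>0$, $P(j)=Q(j)(1+c\xi_j)$ for $j\le k$, $P(k+1)=(1-c\E[\tilde X])/2$. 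Every likelihood ratio $dP/dQ$ lies in the compact interval $[1-c/\sqrt 2,\,1+c/\sqrt 2]$ around $1$, so a Taylor expansion of $f_\lambda(x)=\bar\lambda+\lambda x-x^\lambda$ gives $f_\lambda(1+y)\asymp\lambda(1-\lambda)y^2$ uniformly on this range. Summing yields $H_\lambda(P,Q)\asymp c^2\bigl(\E[\tilde X^2]+(\E[\tilde X])^2\bigr)\asymp c^2\rho$, which fixes constants $c_4,c_5$.

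Now fix $T\in\cT_D^{\mathrm{thresh}}$. It partitions $[k+1]$ by likelihood-ratio intervals into $D$ bins; one bin $b^*$ contains index $k+1$ (whose ratio $1-c\E[\tilde X]$ is the only ratio below $1$) together with some prefix $\{1,\dots,m\}$ of low-$\xi$ indices, while the remaining $D-1$ bins partition $\{m+1,\dots,k\}$ by consecutive $\xi$-intervals $[\eta_{b-1},\eta_b)$. Each aggregated ratio $r_b=TP(b)/TQ(b)$ is a convex combination of input ratios, hence still in $[1-c/\sqrt 2,\,1+c/\sqrt 2]$, so the same Taylor bound gives $H_\lambda(TP,TQ)\asymp c^2\sum_b TQ(b)(r_b-1)^2$. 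For $b\ne b^*$, a direct calculation yields $TQ(b)(r_b-1)^2\asymp \prob(\tilde X\in[\eta_{b-1},\eta_b))\bigl(\E[\tilde X\mid\tilde X\in[\eta_{b-1},\eta_b)]\bigr)^2$, and writing out $r_{b^*}$ explicitly plus Cauchy--Schwarz gives $TQ(b^*)(r_{b^*}-1)^2\lesssim \prob(\tilde X\ge\xi_{m+1})\bigl(\E[\tilde X\mid\tilde X\ge\xi_{m+1}]\bigr)^2$.

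To conclude, observe the monotonicity $\E[\tilde X\mid\tilde X\in[\eta_{b-1},\eta_b)]\le\E[\tilde X\mid\tilde X\ge\eta_{b-1}]$: conditioning also on the upper tail $\tilde X\ge\eta_b$ can only raise the mean. Combined with $\prob(\tilde X\in[\eta_{b-1},\eta_b))\le\prob(\tilde X\ge\eta_{b-1})$, every interval-form term is dominated by a tail-form term at threshold $\eta_{b-1}$, so
$$H_\lambda(TP,TQ)\lesssim c^2\sum_{j=0}^{D-1}\prob(\tilde X\ge\delta_j)\bigl(\E[\tilde X\mid\tilde X\ge\delta_j]\bigr)^2,$$
for $\delta_0=\xi_{m+1}<\delta_1=\eta_1<\dots<\delta_{D-1}=\eta_{D-1}\le 1/\sqrt 2$. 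Applying Lemma \ref{ReverseMarkovTight} with these thresholds (equivalently, as a $(D+1)$-level quantiser of $\tilde X$) bounds the right-hand side by $c_6\E[\tilde X^2](D+1)/R\asymp(D/R)c^2\rho\asymp(D/R)H_\lambda(P,Q)$, which is the claimed inequality. The main obstacle is precisely this partition-to-tail reduction together with the separate handling of bin $b^*$; a useful sanity check is that $(\E[\tilde X])^2\lesssim\rho/R$ (itself a $D=2$ case of Lemma \ref{ReverseMarkovTight}, obtained by letting $\delta_1\to 0^+$), which is what guarantees that the bin-$b^*$ contribution cannot swamp the overall $(D/R)$-scaling.
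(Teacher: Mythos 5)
Your proposal does not prove the statement at hand; it is circular. The statement is \cref{ReverseMarkovTight} itself, whose content is the \emph{existence} of a hard distribution $\tilde q$ on $k \asymp \log(1/\rho)$ points for which every choice of $D$ thresholds captures at most a $D/R$ fraction of $\E[\tilde X^2]$ in the reverse-Markov sup. Your very first step is ``Apply Lemma \ref{ReverseMarkovTight} at parameter $\rho$ to obtain $k$ and a distribution $\tilde q$\dots,'' and your concluding step again invokes the lemma to bound the sum of tail terms by $c_6\,\E[\tilde X^2]\,D/R$. In other words, you assume precisely the object and the bound you were asked to establish. What you have actually written is (a reasonable sketch of) the \emph{downstream application}, namely \cref{datarevsharp}: lifting the hard instance for the reverse Markov inequality to a pair $(P,Q)$ witnessing tightness of the reverse data-processing inequality for $H_\lambda$. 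That is the step the paper carries out in its proof of \cref{datarevsharp} (there via the explicit map $p_i = (1-\tilde\delta_i^2)^{1/\lambda} q_i$ and a comparison to $H_{1/2}$, rather than your perturbative $P(j) = Q(j)(1+c\xi_j)$ construction), and the paper takes \cref{ReverseMarkovTight} itself as imported from \cite{PenEtal23} without reproving it.

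A genuine proof of \cref{ReverseMarkovTight} must \emph{construct} $\tilde q$ and then verify the upper bound on the sup over all threshold choices. The missing idea is a multi-scale construction: spread the second moment evenly over $k \asymp \log(1/\rho)$ geometric scales (e.g., $\tilde X$ taking values $2^{-i}$ with masses chosen so that each scale contributes $\asymp \rho/k$ to $\E[\tilde X^2]$ while the tail quantities $\prob(\tilde X \ge \delta)\left(\E[\tilde X \mid \tilde X \ge \delta]\right)^2$ at any single threshold are also $O(\E[\tilde X^2]/R)$), so that any $D$ thresholds can only ``collect'' $O(D)$ of the $k$ scales. None of this appears in your writeup, and without it there is no proof of the stated lemma.
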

\begin{proof}[Proof of \cref{datarevsharp}]
     We prove the result for $\lambda \in [0.5, 1)$. The result for $\lambda \in (0, 0.5)$ follows by substituting the roles of $p$ and $q$ in our proof. Let $\Tilde{X}$ and $\Tilde q$ be as in \cref{ReverseMarkovTight}, and suppose that $\prob[\Tilde{X} = \Tilde{\delta}_i] = \Tilde{q}_i$ for $i \in [k]$. Then, we know that $\Tilde{X}\in(0,\sqrt{0.5}]$ and $\E[\Tilde{X}^2]\in[c_1\rho,c_2\rho]$. We now construct distributions $p$ and $q$ on $[k+1]$ such that $H_{\lambda}(p,q)=\E[\Tilde{X}^2]$. Let $q_i\equiv\Tilde{q_i}$ for $i\in[k]$ and $q_{k+1}=0$ also set
    \begin{align}
        p_i=
        \begin{cases}
            (1-\Tilde{\delta}_i^2)^{\frac{1}{\lambda}}q_i & i\in [k],\\
            1-\E[(1-\Tilde{X}^2)^\frac{1}{\lambda}] & i=k+1.
        \end{cases}
    \end{align}
  Let $T$ be a threshold channel characterised by thresholds $\gamma_1<\gamma_2<\dots<\gamma_D$. Without loss of generality, we may assume $\gamma_i\in[0.5^{1/\lambda},1)$, as the term $q_i f(\frac{p_i}{q_i})$ has no contribution to $D_f(p||q)$ for $i=k+1$. One can see this as $q_{k+1}=0$ and based on the \cref{dfn: f-div} we only need to calculate $f'(\infty)=\lim_{x\downarrow 0} xf(1/x)$ which is zero for $f(x)=1-x^{\lambda}$. \cref{Hlambdaupper} implies an upper bound on $H_{\lambda}(Tp,Tq)$, given by

    \begin{equation}
        H_{\lambda}(Tp,Tq)\leq 2\lambda H_{1/2}(Tp,Tq).
    \end{equation}
    Define $A_j\triangleq\{i\in[k]:\ \frac{p_i}{q_i}\in[\gamma_{j},\gamma_{j+1})\}=\{i\in[k]:\ \Tilde{\delta}_i\in(\nu_{j},\nu_{j+1}]\}$ where $\nu_j=\sqrt{1-\gamma_{j+1}^{\lambda}}$ and $\nu_{j+1}=\sqrt{1-\gamma_{j}^{\lambda}}$. Denote $p^\prime_j=p(A_j)$ and $q^\prime_j=q(A_j)$ then we have
    \begin{align}
        &H_{\lambda}(Tp,Tq)
        \leq
        2\lambda H_{1/2}(Tp,Tq)
        =
        2\lambda\sum_{j=1}^{D}(\sqrt{q^\prime_j}-\sqrt{p^\prime_j})^2
        =\nonumber\\&
        2\lambda\sum_{j=1}^{D}q^\prime_i\left(1-\sqrt{1-\frac{q^\prime_j-p^\prime_j}{q^\prime_j}}\right)^2
        \leq
        2\lambda\sum_{j=1}^{D}\frac{(q^\prime_j-p^\prime_j)^2}{q^\prime_j}
    \end{align}
    Where the last inequality holds because $1-\sqrt{1-x}\leq x$ for all $x\in[0,1]$. It is easy to see that $q^\prime_j=\prob[\Tilde{X}\in(\nu_j,\nu_{j+1}]]$. Now we calculate an upper-bound for $q^\prime_j-p^\prime_j$.
    \begin{align}
        q^\prime_j-p^\prime_j 
        =
        \sum_{i\in A_j}q_i\left(1-(1-\Tilde{\delta}_i^2)^{1/\lambda}\right)
        \leq
        c\sum_{i\in A_j}q_i\Tilde{\delta}
        =
        c\E\left[\Tilde{X}\mathbb{I}_{\{\Tilde{X}\in(\nu_j,\nu_{j+1}]\}}\right]
    \end{align}
    In which $c\leq 4\sqrt{2}$ is a constant and the last inequality follows from the lemma bellow which we will prove it later.
    \begin{lem}\label{tech}
        There exists a positive constant $c\leq 4\sqrt{2}$ such that for every $x\in(0,\sqrt{2}]$ and any $\lambda\in[0.5,1)$ we have
        \begin{equation}
            1-(1-x^2)^{\frac{1}{\lambda}}\leq cx
        \end{equation}
    \end{lem}
    Now we can prove the desired result as follows
    \begin{equation}
        H_{\lambda}(p,q)\leq 2\lambda c\sum_{j=1}^{D}\prob[\Tilde{X}\in(\nu_j,\nu_{j+1}]]\left(\E[\Tilde{X}|\Tilde{X}\in(\nu_j,\nu_{j+1}]]\right)^2\leq 2\lambda c\cdot c_6 \cdot\E [\Tilde{X}^2]  \frac{ D }{R}
    \end{equation}
    At last, for $\lambda\in(0,0.5)$ from  \cref{Hlambdaupper} we have $H_{\lambda}(Tp,Tq)=H_{\lamb}(Tq,Tp)\leq 2\lamb H_{1/2}(Tq,Tp)$. So the same proof works if we just change the role of $p$ and $q$ in our given example.
    \end{proof}
    Here we state the proof of \cref{tech}.
    \begin{proof}[Proof of \cref{tech}]
        Let $f(x)=1-(1-x^2)^{1/\lambda}$. Let us calculate the first derivative of $f(x)$.
        \begin{align}\label{df}
            f^\prime(x) = \frac{2}{\lambda}x(1-x^2)^{\frac{1}{\lambda}-1}\,.
        \end{align}
        From the above, we can see that $f'$ is upper bounded by $4\sqrt 2$ when $x \in [0, \sqrt 2]$ and $\lambda \in [0.5, 1]$. Hence, $f(x) \le 4\sqrt 2 x$.
    \end{proof}

\end{document}